\def\bSig\mathbf{\Sigma}
\tikzstyle{block} = [rectangle, draw,
\tikzstyle{line} = [draw, -latex']    
\tikzstyle{optional}=[dashed,fill=gray!50]
\tikzstyle{VineNode} = [ellipse, fill = white, draw = black, text = black, align = center, minimum height = 1cm, minimum width = 1cm]
\tikzstyle{DummyNode}  = [draw = none, fill = none, text = black] 
\tikzstyle{TreeLabels} = [draw = none, fill = none, text = black] 
\newcommand{\xshiftNodes}{0.7*\linewidth}
\newcommand{\yshiftLabels}{-.25cm}  
\newcommand{\labelsize}{\footnotesize} 
\DeclareMathOperator*{\argmin}{arg\,min}
\DeclareMathOperator*{\argmax}{arg\,max}
\newtheorem{example}{Example}[section]
\newtheorem{definition}{Definition}[section]
\newtheorem{proposition}{Proposition}[section]
\newtheorem{proof}{Proof}[section]
\begin{document}
\date{}

\title{High-dimensional sparse vine copula regression \\with application to genomic prediction}

\author{{\"O}zge Sahin\footnote{Corresponding author: ozgesahin-94@hotmail.com}\\ \footnotesize Department of Mathematics, Technical University of Munich, Boltzmanstraße 3, Garching, Germany 
        \and 
        Claudia Czado\\ \footnotesize Department of Mathematics, Technical University of Munich, Boltzmanstraße 3, Garching, Germany
\\ \footnotesize Munich Data Science Institute, Walther-von-Dyck-Stra\ss{}e 10, Garching, Germany}

\maketitle
\vspace{-1cm}
\begin{abstract}
High-dimensional data sets are often available in genome-enabled predictions. Such data sets include nonlinear relationships with complex dependence structures. For such situations, vine copula based (quantile) regression is an important tool. However, the current vine copula based regression approaches do not scale up to high and ultra-high dimensions. To perform high-dimensional sparse vine copula based regression, we propose two methods. First, we show their superiority regarding computational complexity over the existing methods. Second, we define relevant, irrelevant, and redundant explanatory variables for quantile regression. Then we show our method's power in selecting relevant variables and prediction accuracy in high-dimensional sparse data sets via simulation studies. Next, we apply the proposed methods to the high-dimensional real data, aiming at the genomic prediction of maize traits. Some data-processing and feature extraction steps for the real data are further discussed. Finally, we show the advantage of our methods over linear models and quantile regression forests in simulation studies and real data applications.
\newline
    \textit{Keywords}: Genomic prediction; High-dimensional data; Variable selection; Vine copula; Quantile regression
\end{abstract}

\newpage

\section{Introduction}\label{sec:intro}
Genomic prediction (GP) aims at predicting a breeding value using genotypic measurements. Then an unobserved trait is predicted using its genotype information like single-nucleotide polymorphism (SNP). With rapid developments in genomic technologies, researchers have high-dimensional SNP data sets. However, it poses some challenges in prediction modeling, such as a small number of observations and a large number of explanatory variables, skewness in variables, irrelevant and redundant variables, interactions among variables, and nonconstant error variance. To solve the drawbacks regarding the data dimensionality in the GP, statistical or machine learning based approaches have been applied \parencite{li2018genomic}. Recently, quantile regression approaches, which model the conditional distribution of the response, have been utilized to deal with the skewness and outliers in the data \parencite{ perez2020bayesian}. Still, the question has been \textit{how to model conditional quantiles flexibly while handling data dimensionality in GP}. 

It is also important to \textit{identify the SNPs relevant} for predicting breeding values to design future genotype studies. For instance, since the human population has been growing, the stability of food supplies has gained much more importance. Hence, recent plant breeding efforts aim at the genetic improvement of crops. \textcite{holker2019european} provided agronomic measurements and more than 500 thousand SNPs to make European flint maize landraces available for such an aim.

Vine copula based (quantile) regression allows modeling a nonlinear relationship between explanatory variables and response. It considers higher-order explanatory variables and deals with unknown functional error forms. However, the current vine copula based regression methods' computational complexity makes them infeasible to be applied in high-dimensional data sets \parencite{kraus2017d, tepegjozova2022nonparametric}.  We refer to high- and ultra high- dimensional data sets when the number of explanatory variables is between ten and 1000 and larger than 1000, respectively. 

We propose two vine copula based regression methods that perform well in analyzing high-dimensional sparse data sets, where sparsity means that many explanatory variables do not predict the response. Their computational complexity is significantly less than the existing methods. We define relevant, redundant, and irrelevant explanatory variables for quantile regression and assess the methods' prediction power in high-dimensional sparse simulated data sets. Our analyses regarding the inclusion of relevant variables and exclusion of irrelevant variables show our methods' capability to provide sparse models. We apply the methods for genomic prediction of maize traits, proposing data preprocessing and feature extraction steps on the data given by  \textcite{holker2019european}. Such steps can be further applied and improved in future studies. Overall, we can resolve data dimensionality issues in vine copula based regression. To the best of our knowledge, there has not yet been any study performing the genomic prediction using vine copula models, as well as assessing vine copula regression methods' performance in the presence of redundant and irrelevant variables. 

Alternative nonlinear quantile regression models are generalized additive models ($GAM$) \parencite{wood2006generalized}, quantile regression forests ($QRF$) \parencite{meinshausen2006quantile}, and quantile regression neural networks ($QRNN$) \parencite{cannon2011quantile}. $QRNN$ may suffer from quantile crossing \parencite{cannon2018non}, which does not exist in vine copula based approaches by construction. \textcite{kraus2017d} show a better performance of their vine copula based approach than $GAM$. Hence, among nonlinear models, we compare our models with quantile regression forests and show our advantages, especially in the presence of dependent variables. Moreover, despite the quantile crossing problem, we analyze the performance of linear models with variable selection, i.e., linear quantile regression with a LASSO type penalty ($LQRLasso$) \parencite{sherwood2017rqpen}, in nonlinear cases. 

The paper is organized as follows: Section \ref{sec:vineregmethods} introduces vine copulas and new methods; Section \ref{sec:sim} provides simulation studies. We present the real data application in Section \ref{sec:realdata}, discuss our findings and conclude in Section \ref{sec:disc}. 

       \section{High-dimensional sparse vine copula regression}\label{sec:vineregmethods}
    \subsection{D-vine copulas and prediction}\label{sec:concepts}
Copulas are distribution functions, allowing us to separate the univariate margins and dependence structure. Let $\bm{X} =(X_1, \ldots, X_p)^\top \in \mathbb{R}^p$ be a $p$-dimensional random vector with the joint cumulative distribution function (cdf) $F$ and the univariate marginal distributions $F_1, \ldots F_p$. By Sklar's theorem \parencite{sklar1959fonctions}, the copula $C$, corresponding to $F$, is a multivariate cdf with uniform margins such that $F(x_1, \ldots , x_p) = C\big(F_1(x_1), \ldots F_p(x_p)\big)$. When the univariate marginal distributions are continuous, $C$ is unique, which we assume  in the remainder. In addition, the $p$-dimensional joint density $f$ can be written as 
$
f(\bm{x}) = c\big(F_1(x_1), \ldots, F_p(x_p)\big)\cdot f_1(x_1) \cdots f_p(x_p), \bm{x} \in \mathbb{R}^p,
$
where $c$ is the copula density of the random vector $\big(F_1(X_1), \ldots F_p(X_p)\big)^\top \in [0,1]^p$. 

Standard multivariate copulas, such as the multivariate exchangeable Archimedean or Gaussian, often do not accurately model the dependence among the variables. \textcite{aas2009pair} developed a pair-copula construction or vine copula approach using a cascade of bivariate copulas to extend their flexibilities. Such a construction can be represented by an undirected graphical structure involving a set of linked trees, i.e., a regular (R-) vine \parencite{bedford2002vines}. A R-vine for $p$ variables consists of $p-1$ trees, where the edges in the first tree are the nodes of the second tree, the edges of the second tree are the nodes of the third tree and so on. If an edge connects two nodes in the $(t+1)$th tree, their associated edges in the $t$th tree must have a shared node in the $t$th tree for $t=1,\ldots, p-2$. The nodes and edges in the first tree represent $p$ variables and unconditional dependence for $p-1$ pairs of variables, respectively. In the higher trees, the conditional dependence of a pair of variables conditioning on other variables is modeled. To get a vine copula or pair-copula construction, there is a bivariate (pair) copula associated with each edge in the vine. 

One special class of R-vines are D-vines, whose graph structure is a path, i.e., all nodes' degree in the graph is smaller than three. A node in a path represents a variable, and an edge between a pair of nodes corresponds to dependence among the variables of the respective nodes expressed by a pair copula. The node having a degree of one is called a leaf node. Once the order of the nodes in the first tree is determined, the associated D-vine copula decomposition is unique. Moreover, if the pair copulas in the higher tree levels than $t$ are independence copulas, where $t < p$ and $t \geq1$, representing conditional independence, a $t$-truncated vine copula is obtained as shown Appendix in \ref{app1}. 

D-vine copulas allow to express the conditional density of a leaf node in the first tree in a closed form. Here we choose the leaf node in the first tree of a D-vine copula as a response variable. In the remainder, assume that $(y_i, x_{i,1}, \ldots, x_{i,p})$, $i=1, \ldots, n$, are realizations of the random vector $(Y, X_1, \ldots, X_p)$, and $Y$ denotes a response variable with its marginal distributions $F_Y$ and the others correspond to explanatory variables with their marginal distributions $F_1, \ldots, F_p$. For the D-vine copula with the node order $0-1-\ldots-p$ corresponding to the variables $Y-X_1-\ldots-X_p$, $p\geq2$,  as stated in \textcite{kraus2017d}, the conditional quantile function $F^{-1}_{0|1,\ldots,p}$ at quantile $\alpha$ can be expressed in terms of the inverse marginal distribution function $F_{Y}^{-1}$ of the response and the conditional D-vine copula quantile function $C^{-1}_{0|1,\ldots,p}$ at quantile $\alpha$ as 
$F^{-1}_{0|1,\ldots,p}(\alpha|x_1,\ldots, x_p) = F_{Y}^{-1}\big(C^{-1}_{0|1,\ldots,p}(\alpha|F_1(x_1), \ldots, F_p(x_p)\big).$

Since $p$-dimensional D-vine copula's input is the marginally uniform data on $[0,1]^{p}$, the estimation of the D-vine copula follows a two-step approach called the inference for margins \parencite{joe1996estimation}. First, each marginal distribution is estimated. Then the data is converted into the copula data by applying probability integral transformation, e.g., using a univariate non-parametric kernel density estimator, i.e., $\hat{F}_Y$ and $\hat{F}_{X_d}$, $d=1, \ldots, p$. Next, we have the pseudo copula data: $(v_i, u_{i, 1}, \ldots, u_{i,p} )= (\hat{F}_Y(y_i), \hat{F}_{X_1}(x_{i, 1}), \ldots, \hat{F}_{X_p}(x_{i, p}))$, $i=1, \ldots, n$, being realizations of the random vector $(V, U_1, \ldots, U_p)$.   More details about the conditional log-likelihood and estimation of D-vine copulas are given in Appendix \ref{app1}. 
 \subsection{Proposed methods: $vineregRes$ and $vineregParCor$} \label{sec:newmodel}
We propose two methods to perform a D-vine copula regression on high-dimensional sparse data sets: $vineregRes$  and $vineregParCor$.  Appendices \ref{app2} and \ref{app3} give an illustrative example and details of the methods. 
 
The method $vineregRes$ performs the variable selection at a given iteration based on the residuals of the previous iteration, i.e., the pseudo-response. It finds the variable among the candidates, which provides the best bivariate copula conditional log-likelihood conditioned on the variable and conditioning the pseudo-response of the previous iteration. Assume $\tilde{y}^{(s)}_i$ and $\tilde{v}^{(s)}_i$ $i=1, \ldots, n$, denote the pseudo-response and its pseudo copula data in the $s$th iteration, respectively, which are realizations of the random variable $Y^{(s)}$ and $V^{(s)}$, respectively. $V^{(0)}$ and $V^{(s)}$ have the indices 0 and $0^{(s)}$, respectively, and are always a leaf node in the first tree. 

   $vineregRes$
   
    $Step$ 1 (initialization): For given data $(y_i, x_{i,1}, \ldots, x_{i,p})$ and $(v_i, u_{i, 1}, \ldots, u_{i,p} )$, define the initial pseudo-response $\tilde{y}^{(1)}_i=y_i$ with its copula scale $\tilde{v}^{(1)}_i$, $i=1, \ldots, n$, the initial D-vine order $\mathcal{D}^{(1)}=(0)$, the initial chosen variable index set $\mathcal{I}^{(1)}_{var}=\emptyset$, and the initial set of candidate explanatory variables $p^{(1)}_{cand} = \{1, \ldots, p\}$.

    For $s=1, 2, \ldots,$
    
    $Step$ 2 (variable selection): Fit a parametric bivariate copula to data \{$(\tilde{v}^{(s)}_i, u_{i, d})$, $i=1, \ldots, n$\} for $d \in p^{(s)}_{cand}$ and denote the copula, copula density, and its estimated parameters by $\widehat{CR}^{d_{(s)}}$,$\widehat{cr}^{d_{(s)}}$ and $\bm{\hat{\theta}}^{d_{(s)}}$, respectively. Then, find the variable for which the conditional log-likelihood of the copula $\widehat{CR}^{d^{*}_{(s)}}$ is maximized, i.e.,
    
    $
         d^{*}_{(s+1)} = \argmax_{d_{(s)}  \in p^{(s)}_{cand}} \sum_{i=1}^{n} \textrm{ln } \widehat{cr}^{d_{(s)}}_{0^{(s)} | d_{(s)}}(\tilde{v}^{(s)}_i | u_{i, d_{(s)}}; \bm{\hat{\theta}}^{d_{(s)}}).
         $
    
  $Step$ 3 (D-vine extension): Extend the D-vine order by adding the variable with index $d^{*}_{(s+1)}$ to get a D-vine order $\mathcal{D}^{(s+1)}$ = $(\mathcal{D}^{(s)}, d^{*}_{(s+1)})$. Select the parametric pair copula families and estimate the parameters in the extended D-vine structure, where the associated D-vine copula and its estimated parameters are denoted by $\hat{C}^{(s+1)}$ and $\bm{\hat{\theta}}^{(s+1)}$, respectively. 
  
   $Step$ 4 (Chosen variable indices and hyperparameter updates): Extend the chosen variable set, adding the new variable, $\mathcal{I}^{(s+1)}_{var}=\mathcal{I}^{(s)}_{var} \cup d^{*}_{(s+1)}$ and update $p^{(s+1)}_{cand} = p^{(s)}_{cand} \setminus d^{*}_{(s+1)}$.
  
  $Step$ 5 (Pseudo-response update or stop): If a stopping condition
  (Section \ref{sec:stop}) does not hold, estimate the median of the response variable based on the D-vine copula $\hat{C}^{(s+1)}$ and update the pseudo-response, i.e.,
   
   $
    \tilde{y}_i^{(s+1)} =y_i - \hat{F}^{-1}_{Y}(\hat{C}^{-1(s+1)}_{0 | \bm{I}^{(s+1)}_{var}}(0.50 |u_{i, p_1}, \ldots, u_{i, p_{d_{(s+1)}}}; \bm{\hat{\theta}}^{(s+1)} )),  
    $
    \quad$
    \tilde{v}_i^{(s+1)} = \hat{F}_{Y^{(s+1)}}(\tilde{y}_i^{(s+1)} ), 
    $
    
 where $i=1, \ldots, n$ and $\{p_1, \ldots, p_{d_{(s+1)}}\} \subseteq \mathcal{I}^{(s+1)}_{var}$.

Another method to perform an efficient D-vine copula regression for high-dimensional data is to use the partial correlation between the response and a candidate explanatory variable given the chosen variables at each iteration based on their empirical normal scores. Such scores are calculated as detailed in Section 1 of \textcite{joe2014dependence}.

  $vineregParCor$
  
    $Step$ 1 (initialization): As given in $vineregRes$ and the data's normal scores.

    For $s=1, 2, \ldots,$
    
    $Step$ 2 (variable selection): 
    $
         d^{*}_{(s+1)} = \argmax_{d_{(s)}  \in p^{(s)}_{cand}} |{\hat{\rho}_{0, {d_{(s)}  };\mathcal{I}^{(s)}_{var} }}|,
         $
         where $\hat{\rho}_{j,k;S}$ is the estimated partial correlation of variables $j,k$ given those indexed in $S$ based normal scores.
 
  $Step$ 3 (D-vine extension): As given in $vineregRes$.
  
   $Step$ 4 (Chosen variable indices and hyperparameter updates): As given in $vineregRes$.
 
  \subsection{Bivariate copula selection}\label{sec:bicopsel}
Step 3 of $vineregRes$ and $vineregParCor$ selects parametric pair copulas and estimates their parameters associated with the extension of the D-vine structure. Also, Step 2 of $vineregRes$ fits a parametric bivariate copula to the pseudo-response and a candidate explanatory variable. First, we  estimate the parameters that maximize the log-likelihood of a candidate bivariate copula family, which is listed in Appendix \ref{app4}. Later we can select the one with the lowest Akaike (AIC) or the Bayesian information criterion. While extending the D-vine structure and adding new trees at Step 3 of the methods, the fit of parametric pair copulas can be performed sequentially from the lowest to the highest trees \parencite{Brechmann2010}.

  \subsection{Stopping criteria}\label{sec:stop}
To decide if a chosen candidate explanatory variable in a given iteration should be in a model, we will consider the conditional AIC, which penalizes the conditional log-likelihood of the model based on the D-vine copula by the effective degrees of freedom in the model   whose details are provided in Appendix \ref{app4}.   We stop adding variables in D-vine copula regressions when the current iteration's conditional AIC is equal to or larger than the previous iteration's conditional AIC. If the conditional AIC always improves in each iteration, we stop after all explanatory variables are included in the model.
 
  \subsection{Complexity}\label{sec:comp}
Assuming that the data consists of $p$ explanatory variables, the complexity of the existing method $vinereg$ is $\mathcal{O}(p^{3})$ in terms of the total number of bivariate copulas to be selected during the algorithm \parencite{tepegjozova2019d}. Thus, we evaluate the complexity of $vineregRes$ and $vineregParCor$ using the same criterion. We will consider the worst case scenario that the algorithms run until all explanatory variables are included in the model. Further, the total number of estimated parameters is linear in terms of the number of bivariate copulas.  The detailed calculations in Appendix \ref{app4} show that the complexity of $vineregParCor$ and $vineregRes$ in terms of the total number of selected bivariate copulas is $\mathcal{O}(p^{2})$. Hence, our methods significantly reduce the computational complexity of $vinereg$.

 \subsection{Relevant, redundant, and irrelevant variables}\label{sec:red-irr}
Now we define relevant, redundant, and irrelevant variables for predicting the conditional quantile of a response variable $Y$ given the index set of explanatory variables $\mathcal{X}$. We will denote the cdf of the variables with the index set $\mathcal{X}$ by $F_\mathcal{X}$ in the following.

\begin{definition}{(Relevant variables)} The index set of variables $\mathcal{M}  \subseteq \mathcal{X}$ is called relevant for $Y$ if and only if it holds $F_{Y|\mathcal{M}}(y| \bm{x}_\mathcal{M}) \neq F_{Y}(y)$, where $\bm{x}_\mathcal{M}$ includes the variables in $\mathcal{M}$.
\end{definition}
\begin{definition}{(Redundant variables)} The  index set of variables $\mathcal{R}$ is called redundant given the set of variables $\mathcal{M}$ for $Y$ if and only if it holds $F_{Y|\mathcal{M}, \mathcal{R}}(y| \bm{x}_\mathcal{M}, \bm{x}_\mathcal{R}) =F_{Y|\mathcal{M}}(y|\bm{x}_\mathcal{M})$ and $F_{\mathcal{M}, \mathcal{R}}( \bm{x}_\mathcal{M}, \bm{x}_\mathcal{R}) \neq F_{\mathcal{M}}( \bm{x}_\mathcal{M}) \cdot F_{\mathcal{R}}(\bm{x}_\mathcal{R})$, where the vectors $\bm{x}_\mathcal{M}$ and $\bm{x}_\mathcal{R}$ include the variables in the sets $\mathcal{M}$ and $\mathcal{R}$, respectively, $\mathcal{R} \subseteq \mathcal{X}, \mathcal{M} \subseteq \mathcal{X}, \mathcal{R} \cap  \mathcal{M}=\emptyset$.
\end{definition}
A discussion on redundant variables in a D-vine copula is in Appendix \ref{app5}.
\begin{example} Consider the model $(Y,X_1,X_2)^\top \sim \mathcal{N}_3(\bm{0}, \Sigma)$ with (0.5, 0.4, 0.8) vectorizing the upper triangular part of the symmetric covariance matrix $\Sigma$, where it holds $\rho_{Y,X_2;X_1}=0$, i.e., $Y$ is conditionally independent of $X_2$ given $X_1$. Hence, we have $f_{Y| X_1,X_2}(y|x_1,x_2) = \frac{f_{Y,X_2|X_1}(y, x_2|x_1)}{f_{X_2|X_1}(x_2|x_1)} = \frac{f_{Y|X_1}(y|x_1)\cdot f_{X_2|X_1}(x_2|x_1)}{f_{X_2|X_1}(x_2|x_1)}=f_{Y|X_1}(y|x_1)$. \\Since $f_{X_1, X_2} (x_1, x_2) \neq f_{X_1}(x_1) \cdot f_{X_2}(x_2)$, $X_2$ is redundant given $X_1$ for $Y$.
\end{example}

\begin{definition}{(Irrelevant variables)} The set of variables $\mathcal{I}$ is called irrelevant given the set of variables $\mathcal{M}$ for $Y$ if and only if it holds $F_{Y|\mathcal{M}, \mathcal{I}}(y| \bm{x}_\mathcal{M}, \bm{x}_\mathcal{I}) =F_{Y|\mathcal{M}}(y|\bm{x}_\mathcal{M})$, $F_{\mathcal{M}, \mathcal{I}}( \bm{x}_\mathcal{M}, \bm{x}_\mathcal{I}) = F_{\mathcal{M}}( \bm{x}_\mathcal{M}) \cdot F_{\mathcal{I}}(\bm{x}_\mathcal{I})$, and $F_{Y, \mathcal{I}}(y, \bm{x}_\mathcal{I}) = F_Y( y) \cdot F_{\mathcal{I}}(\bm{x}_\mathcal{I})$, where the vectors $\bm{x}_\mathcal{M}$ and $\bm{x}_\mathcal{I}$ include the variables in the sets $\mathcal{M}$ and $\mathcal{I}$, respectively, $\mathcal{I} \subseteq \mathcal{X}, \mathcal{M} \subseteq \mathcal{X}, \mathcal{I} \cap  \mathcal{M}=\emptyset$.
\end{definition}

\begin{example} Consider the model $(Y,X_1,X_2)^\top \sim \mathcal{N}_3(\bm{0}, \Sigma)$ with (0.5, 0, 0) vectorizing the upper triangular part of the symmetric covariance matrix $\Sigma$, where it holds $\rho_{Y,X_2;X_1}=0$; hence, $f_{Y| X_1,X_2}(y|x_1,x_2)=f_{Y|X_1}(y|x_1)$. In addition, it holds $f_{X_1, X_2} (x_1, x_2) = f_{X_1}(x_1) \cdot f_{X_2}(x_2)$ and $f_{Y, X_2} (y, x_2) = f_{Y}(y) \cdot f_{X_2}(x_2)$; thus, $X_2$ is irrelevant given $X_1$ for $Y$.
\end{example}

  \section{Simulation study}\label{sec:sim}
We show the flexibility and effectiveness of the proposed methods on simulated datasets being nonlinear and having different sparsity. We explore the following questions: (Q1) How do $vineregRes$ and $vineregParCor$ work in situations with nonlinear explanatory variable effects on the response's quantiles in the presence of redundant and irrelevant variables for prediction accuracy and computational complexity? (Q2) How well do $vineregRes$ and $vineregParCor$ identify relevant and irrelevant variables for predicting the response's quantiles? (Q3) How do $vineregRes$ and $vineregParCor$ perform compared to  the alternative methods $LQRLasso$ and $QRF$ introduced and discussed in Appendix \ref{app6}? 
 
 \subsection{Data generating process (DGP)} \label{sec:dgp}
\subsection*{DGP1: irrelevant variables}
\begin{equation}
\begin{aligned}
 &Y^{d}_{i} = X_{i,1} \cdot X^2_{i,2} \cdot \sqrt{| X_{i,3}|+0.1} + e^{0.4 \cdot  X_{i,4} \cdot  X_{i,5}}+ \\&(X_{i, 6}, \ldots, X_{i, p_d}) (0, \ldots, 0)^\top+ \epsilon_i \cdot \sigma_i,  \quad i=1, \ldots, n,  \quad d=1,2,3,
 \end{aligned}
\label{eq:sim1}
\end{equation}
where we sample the relevant variables $(X_{i,1}, \ldots, X_{i,5})^\top\sim \mathcal{N}_{5}(\bm{0}, \Sigma)$, $i=1, \ldots, n$ with the $(a, b)$th element of the covariance matrix $\Sigma_{a,b} = 0.75^{|a-b|}$, irrelevant variables \\$(X_{i,6}, \ldots, X_{i,p_d})^\top\sim \mathcal{N}_{p_d-5}(\bm{0}, \mathbb{I}_{p_d-5})$, the random error terms $\epsilon_i \sim \mathcal{N}(0, 1)$ that are independent and identically distributed (iid), independently, and set $\sigma_i\in\{0.5,1\}$, $i=1, \ldots, n$. We simulate data sets with different number of irrelevant variables and set it to $(p_d - 5)$ in each case $d=1,2,3$ as follows: Case 1 with $p_1$ = 10 (50\% of variables are irrelevant),  Case 2 with $p_2$ = 20 (75\% of variables are irrelevant), Case 3 with $p_3$ = 50 (90\% of variables are irrelevant).

\subsection*{DGP2: redundant variables}
\begin{equation}
\begin{aligned}
 &Y^{d}_{i}  = \sqrt{|5 \cdot X_{i,1} - 2 \cdot X_{i,9}+0.5|} + X_{i,8} \cdot (-4 \cdot X_{i,3}+1) +e^{X_{i,6}} + (2\cdot X^{3}_{i,10} +  X^{3}_{i,4})  \\&
 + (X_{i,7}+1)\cdot (\textrm{ln } (| X_{i,2} +  X_{i,5}| + 0.01))+ 
 \\&(X_{i, 11}, \ldots, X_{i, p_d}) (0, \ldots, 0)^\top+ \epsilon_i \cdot \sigma_i,  \quad i=1, \ldots, n,\quad d=1,2,3,4,
 \end{aligned}
\label{eq:sim2}
\end{equation}
where the samples of explanatory variables are independently generated from a multivariate normal distribution with a Toeplitz correlation structure, i.e., $(X_{i,1}, \ldots, X_{i,p_d})^\top\sim \mathcal{N}_{p_d}(\bm{0}, \Sigma)$, $i=1, \ldots, n$, $j=1,2,3$, with the $(a, b)$th element of the covariance matrix $\Sigma_{a,b} = \rho^{|a-b|}$. To represent a challenging but realistic scenario, we set $\rho=0.75$. We sample $\epsilon_i \sim \mathcal{N}(0, 1)$, $i=1, \ldots, n$ (iid) independently from the explanatory variables and set $\sigma_i\in\{0.5,1\}$, $i=1, \ldots, n$. All variables are predicting the response's quantiles; however, given the first ten variables, the others are redundant. We change the number of redundant variables and set it to $(p_d - 10)$ for $d=1,2,3,4$:  Case 1 with $p_1$ = 20 (50\% of variables are redundant given the ten relevant ones), Case 2 with $p_2$ = 40 (75\% of variables are redundant given the ten relevant ones), Case 3 with $p_3$ = 100 (90\% of variables are redundant given the ten relevant ones), Case 4 with $p_4$ = 1000 (99\% of variables are redundant given the ten relevant ones). 

Based on the DGPs in Equations \eqref{eq:sim1} and \eqref{eq:sim2}, we simulate samples with size 450 ($n$=450) with a random split of 300/150 observations for a training/a test set. We replicate our procedure 100 times and average performance measures per sample, which are defined in Appendix \ref{app6}.

\subsection{Performance measures}\label{sec:perfmeasures}
We consider the computation time, the number of chosen variables, \textit{true positive rate (TPR}), and \textit{false discovery rate (FDR)} as methods' performance measures on the training set. To evaluate the performance of a method on the test set, we apply \textit{the pinball loss} ($PL_{\alpha}$) at $\alpha=0.05, 0.50, 0.95$.  TPR is the ratio of the number of chosen relevant variables by a method $M$ to the total number of relevant variables. FDR is the ratio of the number of chosen irrelevant variables to the total number of chosen variables by a method $M$. Higher TPR and smaller FDR are better. $PL_{\alpha}$ measures the accuracy of the quantile predictions $\hat{y}^{\alpha, M}_i$ at the level $\alpha$ by a method $M$ compared to the given response $y_i$,  $i=1,\ldots, n$. The pinball loss is identical to check score, and smaller pinball loss values are better \parencite{steinwart2011estimating}. The formal definitions are given in Appendix \ref{app6}.

\subsection{Results}\label{sec:simresults}
All computations are run on a single node CPU with Intel Xeon Platinum 8380H Processor with around 25 GB RAM, running {\fontfamily{pcr}\selectfont R} version  4.2.2. However, Step 2 of $vineregRes$ can be broken down into parallel fits across candidate variables for a faster computation.

\textit{Variable selection and computational complexity results on the training set}: In Table \ref{tbl:sim-in}, we analyze the TPR and FDR only for the DGP1 setting since all variables are relevant in the DGP2 setting. The computations for $vinereg$ did not complete within three days per replication for the fourth case of the DGP2 setting, making it computationally infeasible. Also, we did not run $LQRLasso$ for that case since it ran around seven hours per replication and had worse performances in the other simulation cases. 

In all cases, $QRF$ chooses all variables in the associated DGP to make predictions. Thus, its TPR is one, the number of selected variables equals the total number of variables in a sample, and its FDR is the proportion of the irrelevant variables in the associated DGP setting. More analyses about $QRF$ are provided in Appendix \ref{app7}.

Excluding $QRF$, in all cases of the DGP1 setting, $vinereg$ has a better TPR performance than the others. However, its FDR is higher than others, adding many irrelevant variables to the model. $vineregRes$ correctly identifies more than 75\% of the relevant variables in all cases of the DGP1 setting. Its FDR is less than 15\% there, making it the best method for FDR. Further, $vineregParCor$'s TPR is higher than 50\% in all DGP1 cases. However, like others, its FDR increases as the number of irrelevant variables increases in the model, reaching more than 50\% in the third DGP1 case. $LQRLasso$ identifies at least 48\% of the relevant variables, but its TPR decreases when the number of irrelevant variables increases. 

While $vineregRes$ selects the lowest number of variables between four and six, $vinereg$ includes almost half of the total number of variables in the data in each case. This highlights the power of $vineregRes$ regarding the exclusion of irrelevant variables in sparse data sets.  $vineregParCor$'s number of chosen variables is between $vinereg$ and $vineregRes$ in all evaluated cases. The same applies to $LQRLasso$, but it selects more variables for estimating median predictions than other quantiles. In an ultra high-dimensional case with 1000 explanatory variables, the number of variables chosen by $vineregParCor$ is, on average, 31.72 with the empirical standard error of 1.23 while it is 7.08 with that of 0.54 for $vineregRes$. 

As the number of variables increases, the average running time for all methods increases. Among vine based methods, $vineregParCor$ provides the fastest computation as expected from the results in Section \ref{sec:comp}. However, $QRF$ provides the fastest computation among all methods considered. $vineregRes$ and $vineregParCor$ run less than 15 minutes in the ultra high-dimensional case. $LQRLasso$'s running time for quantile levels does not differ much.

\begin{table}[h]
\caption{Comparison of the methods' performance on the training set over 100 replications under the cases 1---3 and 1---4 specified in Equations \eqref{eq:sim1} and \eqref{eq:sim2}, respectively. The numbers in parentheses under a method's name column are the corresponding empirical standard errors. (-) shows computational infeasibility. $LQRLasso$ column corresponds to the quantile levels (0.05, 0.50, 0.95). Chosen Vars. corresponds to the total number of chosen variables. Time is in minutes and per replication.}
\centering
\scalebox{0.65}{
\begin{tabular}{cccrrrrr}
\hline
DGP & Measure & Case & $vinereg$ & $vineregRes$ &$vineregParCor$& $QRF$ & $LQRLasso$ (0.05, 0.50, 0.95) \\ \hline
\multirow{12}{*}{1} & \multirow{3}{*}{TPR} & 1 & 0.81 (0.01) & 0.80 (0.02) & 0.67 (0.02) & 1.00 (0.00) & 0.73 (0.02), 0.69 (0.02), 0.63 (0.02) \\ 
 &  & 2 & 0.85 (0.01) & 0.79 (0.03) & 0.59 (0.02) & 1.00 (0.00) & 0.68 (0.02), 0.68 (0.02), 0.55 (0.02) \\ 
 &  & 3 & 0.89 (0.01) & 0.78 (0.02) & 0.56 (0.02) & 1.00 (0.00) & 0.61 (0.02), 0.61 (0.02), 0.48 (0.01) \\ \cline{2-8} 
 & \multirow{3}{*}{FDR} & 1 & 0.28 (0.01) & 0.08 (0.01) & 0.24 (0.02) & 0.50 (0.00) & 0.28 (0.02), 0.32 (0.02), 0.24 (0.02) \\ 
 &  & 2 & 0.55 (0.01) & 0.13 (0.02) & 0.45 (0.02) & 0.75 (0.00) & 0.38 (0.02), 0.49 (0.03), 0.34 (0.03) \\ 
 &  & 3 & 0.80 (0.00) & 0.15 (0.02) & 0.65 (0.02) & 0.90 (0.00) & 0.35 (0.03), 0.58 (0.02), 0.40 (0.03) \\ \cline{2-8} 
 & \multirow{3}{*}{\begin{tabular}[c]{@{}c@{}}Chosen \\ Vars.\end{tabular}} & 1 & 5.83 (0.13) & 4.56 (0.19) & 4.68 (0.16) & 10.00 (0.00) & 5.24 (0.22), 5.48 (0.21), 4.99 (0.26) \\ 
 &  & 2 & 9.76 (0.19) & 5.04 (0.26) & 6.03 (0.23) & 20.00 (0.00) & 6.60 (0.41), 8.48 (0.43), 5.30 (0.36) \\ 
 &  & 3 & 23.24 (0.42) & 5.29 (0.33) & 8.74 (0.30) & 50.00 (0.00) & 5.93 (0.38), 10.20 (0.68), 5.97 (0.47) \\ \cline{2-8} 
 & \multirow{3}{*}{Time} & 1 & 0.18 (0.00) & 0.17 (0.01) & 0.06 (0.00) & 0.01 (0.00) & 0.02 (0.00), 0.02 (0.00), 0.02 (0.00) \\ 
 &  & 2 & 0.97 (0.02) & 0.30 (0.02) & 0.10 (0.01) & 0.01 (0.00) & 0.02 (0.00), 0.43 (0.03), 0.02 (0.00) \\ 
 &  & 3 & 12.06 (0.31) & 0.77 (0.05) & 0.34 (0.03) & 0.03 (0.00) & 0.12 (0.00), 0.14 (0.00), 0.12 (0.00) \\ \hline
\multirow{8}{*}{2} & \multirow{4}{*}{\begin{tabular}[c]{@{}c@{}}Chosen \\ Vars.\end{tabular}}  & 1 & 10.94 (0.23) & 5.41 (0.19) & 6.68 (0.20) & 20.00 (0.00) & 7.05 (0.32), 10.12 (0.36), 8.95 (0.40) \\ 
 &  & 2 & 19.63 (0.54) & 5.17 (0.17) & 8.25 (0.28) & 40.00 (0.00) & 7.88 (0.43), 12.36 (0.51), 9.88 (0.48) \\
 &  & 3 & 62.92 (2.78) & 5.83 (0.22) & 11.66 (0.42) & 100.00 (0.00) & 7.35 (0.36), 15.45 (0.78), 9.44 (0.55) \\
 &  & 4 & - & 7.08 (0.54) & 31.72 (1.23) & 1000.00 (0.00) & - \\ \cline{2-8} 
 & \multirow{4}{*}{Time} & 1 & 1.15 (0.03) & 0.20 (0.01) & 0.16 (0.01) & 0.01 (0.00) & 0.09 (0.00), 0.10 (0.00), 0.08 (0.00) \\ 
 &  & 2 & 7.30 (0.26) & 0.32 (0.01) & 0.29 (0.03) & 0.02 (0.00) & 0.11 (0.00), 0.13 (0.00), 0.11 (0.00) \\ 
 &  & 3 & 159.22 (8.66) & 0.84 (0.03) & 0.72 (0.06) & 0.05 (0.00) & 0.35 (0.00), 0.35 (0.00), 0.34 (0.00) \\ 
 &  & 4 & - & 9.44 (0.69) & 12.18 (1.26) & 0.44 (0.00) & - \\ \hline
\end{tabular}}
\label{tbl:sim-in}
\end{table}

\textit{Prediction accuracy results on the test set}: Table \ref{tbl:sim-out} shows that $vineregRes$ provides the best fit in eight evaluations out of nine (three pinball losses evaluated for three levels) in the DGP1 setting among vine copula based methods. $vinereg$ and $vineregParCor$ have the same accuracy as $vineregRes$ for the first case in the DGP1 setting. However, as the number of irrelevant variables increases, a residual-based variable selection may be better than other vine copula based methods. Moreover, $LQRLasso$ has the highest pinball loss in all cases of the DGP1 setting because of the high nonlinearity in samples. Even though $vineregParCor$'s performance is better than $LQRLasso$, it provides worse fits than the others at the level 95\%. A likely explanation can be that including irrelevant variables in addition to the most relevant ones in a vine copula may negatively impact the prediction accuracy. However, a similar result does not apply to $QRF$. Despite including all irrelevant variables in the model, $QRF$ still performs better than all in seven evaluations out of nine.

Table \ref{tbl:sim-out} shows that  $vineregRes$ provides the lowest pinball loss at three quantile levels in all DGP2 cases, except at the level 95\% in the first case. Since $vineregRes$ gives the most sparse models in the DGP2 setting in Table \ref{tbl:sim-in}, we infer that including many relevant but potentially redundant variables in $vinereg$, $vineregParCor$, and $QRF$ is worsening the prediction accuracy in the DGP2 setting. $LQRLasso$ suffers from nonlinearity. 

Since the relevant, irrelevant, and redundant variables are known in simulation studies, when only the relevant ten variables are used for prediction in the DGP2 setting, $QRF$ has the pinball loss of 0.64, 1.81, and 0.62 at levels $0.05, 0.50,$ and $0.95$,  respectively. Thus,  $vineregRes$ would have better accuracy than $QRF$ in most cases of the DGP2 setting, even if the latter selected the most relevant variables. Thus, $vineregRes$ is more advantageous than $QRF$ in the presence of many dependent variables in our simulations. 

\begin{table}[H]
\centering
\caption{Comparison of the average performance of the methods on the test set for the pinball loss ($PL_\alpha$)  at different quantile levels $\alpha$ over 100 replications under the cases 1---3 and 1---4 specified in Equations \eqref{eq:sim1} and \eqref{eq:sim2}, respectively. The best performance for each quantile level and DGP case is highlighted. The numbers in parentheses under a method's name column are the corresponding empirical standard errors. (-) shows computational infeasibility.}
\scalebox{0.9}{
\begin{tabular}{lllrrrrr}
\hline
DGP & Measure & Case & $vinereg$ & $vineregRes$ & $vineregParCor$& $QRF$ & $LQRLasso$ \\ \hline
\multirow{9}{*}{1} & \multirow{3}{*}{$PL_{0.05}$} & 1 & \hl{0.21 (0.01)} & \hl{0.21 (0.01)} & \hl{0.21 (0.01)} & 0.22 (0.01) & 0.34 (0.01) \\
 &  & 2 & 0.23 (0.01) & \hl{0.22 (0.01)} & \hl{0.22 (0.01)} & \hl{0.22 (0.01)} & 0.34 (0.02) \\
 &  & 3 & 0.24 (0.01) & \hl{0.21 (0.01)} & 0.22 (0.01) & 0.22 (0.01) & 0.32 (0.01) \\ \cline{2-8} 
 & \multirow{3}{*}{$PL_{0.50}$} &  1 & 0.79 (0.04) & 0.79 (0.03) & 0.81 (0.04) & \hl{0.76 (0.04)} & 0.94 (0.02) \\
 &  & 2 & 0.84 (0.04) & 0.79 (0.03) & 0.82 (0.04) & \hl{0.69 (0.02)} & 0.97 (0.04) \\
 &  & 3 & 0.84 (0.02) & 0.76 (0.02) & 0.79 (0.02) & \hl{0.72 (0.02)} & 0.90 (0.02)  \\  \cline{2-8} 
 & \multirow{3}{*}{$PL_{0.95}$} &1 & 0.43 (0.07) & 0.43 (0.06) & 0.46 (0.07) & \hl{0.41 (0.07)} & 0.59 (0.04) \\
 &  & 2 & 0.37 (0.04) & 0.39 (0.04) & 0.44 (0.04) & \hl{0.32 (0.03)} & 0.64 (0.07) \\
 &  & 3 & 0.38 (0.03) & 0.38 (0.03) & 0.41 (0.03) & \hl{0.35 (0.03)} & 0.53 (0.03) \\ \hline
\multirow{12}{*}{2} & \multirow{4}{*}{$PL_{0.05}$} & 1 & \hl{0.53 (0.01)} & \hl{0.53 (0.01)} & 0.54 (0.01) & 0.70 (0.02) & 0.84 (0.02) \\
 &  & 2 & 0.57 (0.01) & \hl{0.54 (0.01)} & 0.56 (0.01) & 0.70 (0.02) & 0.85 (0.02) \\
 &  & 3 & 0.81 (0.04) & \hl{0.54 (0.01)} & 0.58 (0.01) & 0.72 (0.01) & 0.92 (0.03) \\
 &  & 4 & - & \hl{0.55 (0.01)} & 0.89 (0.02) & 0.78 (0.02) & - \\
  \cline{2-8} 
 & \multirow{4}{*}{$PL_{0.50}$} & 1 & 1.87 (0.02) & \hl{1.83 (0.02)} & 1.84 (0.02) & 1.91 (0.02) & 2.20 (0.02) \\
 &  & 2 & 1.99 (0.02) & \hl{1.84 (0.02)} & 1.86 (0.02) & 1.93 (0.03) & 2.26 (0.03) \\
 &  & 3 & 2.59 (0.09) & \hl{1.84 (0.02)} & 1.94 (0.02) & 2.00 (0.03) & 2.29 (0.02) \\
 &  & 4 & - & \hl{1.89 (0.03)} & 2.42 (0.03) & 2.17 (0.03) & - \\ \cline{2-8} 
 & \multirow{4}{*}{$PL_{0.95}$} & 1 & \hl{0.53 (0.01)} & 0.55 (0.01) & 0.55 (0.01) & 0.65 (0.01) & 0.78 (0.02) \\
 &  & 2 & 0.57 (0.01) & \hl{0.56 (0.01)} & \hl{0.56 (0.01)} & 0.67 (0.01) & 0.82 (0.02) \\
 &  & 3 & 0.81 (0.05) & \hl{0.57 (0.01)} & 0.61 (0.02) & 0.68 (0.01) & 0.83 (0.02) \\
 &  & 4 & - & \hl{0.57 (0.02)} & 0.88 (0.03) & 0.75 (0.02) & - \\ \hline
\end{tabular}}
\label{tbl:sim-out}
\end{table}
 
 \section{Application: the genomic prediction of maize traits}\label{sec:realdata}
We describe a real-data application on the doubled-haploid (DH) lines from European flint maize landraces that motivates our methods' usage. \textcite{holker2019european} evaluated 899 DH lines whose data contains genotypic measurements with the SNP array technology and phenotypic measurements of agronomic traits across environments. 

We are interested in the relationship between a DH line’s genotype encoded by its SNPs and its phenotypic outcome described by its traits, i.e., the genomic prediction of maize traits. Specifically, we would like to find relevant SNPs for a trait in a multivariate prediction model using our high-dimensional vine copula regression methods, performing variable selection.

 \subsection{Data description and preprocessing}\label{sec:data-prep}
There are three landraces in the data, and we focus on the Kemater Landmais Gelb (KE) landrace, which has the largest number of observations  (471 out of 899). There are 501,124 explanatory variables, SNPs, which have only zero and two as values, e.g., zero corresponds to the genotype TT, and two denotes the genotype CC.  We predict four responses of agronomic traits separately: early plant height measured by centimetres at the fourth and sixth stages (PH\_V4/V6), female flowering time (FF), and male flowering time (MF) measured by days (Figure \ref{fig:feature-trait}). Plant breeders need to increase the early plant development and avoid decreasing or increasing female and male flowering times during the maize genotype adoption. Thus, the traits' prediction from the genotypic measurements is crucial.

To compare the performance of regression methods, we partition our data randomly into training (67\%) and test (33\%) sets. Then the former and latter contain 314 and 157 observations, respectively. Further, we remove the duplicate explanatory variables, retaining one and the common explanatory variables  with the threshold of 5\%.  For instance, assume an explanatory variable in our training set contains 300 zero values and 14 two values. Then, such a variable does not differ among the observations and might not be expected to have predictive power on a response. Thus, the number of explanatory variables in the training and test sets decreases from 501124 to 44789, i.e., we retain around 9\% of the initial explanatory variables. Hence, the number of observations (DH lines) in the training sets is 314, whereas 147 for their test sets. The number of explanatory variables (SNPs) is 44789 ($p=1,\ldots, 44789$), and there are four univariate responses (traits) ($k=1,\ldots, 4$). 

 \subsection{Feature extraction}\label{sec:feature}
Since our explanatory variables are binary, and there can be associated latent variables with a prediction power on the response, we focus on estimating these latent variables and using them as extracted features in a regression method. Our approach is to group the explanatory variables and estimate their weights in their groups so that such weights are used to estimate the latent variables representing each group. Let $\bm{y}_{k}$ and $\bm{SNP}_{p}$ denote the response vector for trait $k$ and explanatory variable vector $p$, respectively.
\begin{enumerate}
\item Fit a linear regression between a response and an explanatory variable, SNP:

$\bm{y}_{k}= \hat{\beta_{0}}^{p,k} + \hat{\beta_1}^{p,k} \cdot \bm{SNP}_{p}, \quad for  \quad k=1,\ldots,4,  \quad  p = 1, \ldots, 44789$.

\item Perform a two-tailed Wald test for $H_0: \beta_1^{p,k}=0$ versus $H_1: \beta_1^{p,k}\neq0$ and determine the associated $P$-values $P^{p,k}$ for  $k=1,\ldots,4, \quad  p = 1, \ldots, 44789$.

\item Screen the explanatory variables whose $P$-value from the second step are smaller than 0.10 and have the screened set $S_{k}$:

$S_{k}= \{\bm{SNP_{p}}: P^{p,k} < 0.10 | p = 1, \ldots, 44789\} \quad \textrm{for} \quad k=1,\ldots,4$.

\item Order the set of the explanatory variables $S_{k}$ based on their $P$-value non-decreasingly:

$O_{k} = \{\bm{SNP_{w_{1}}}, \ldots, \bm{SNP_{w_{|S_{k}|}}\}}$ with $P^{w_{1},k} \leq ... \leq  P^{w_{|S_{k}|},k}$ for $O_{k} =S_{k}$, $k=1,\ldots,4$.

\item Estimate the latent variables, i.e., create the continuous features $\bm{feature}^{G}_{k, d_k}$ by using a grouping size $G$ of explanatory variables in $O_{k}$ and using their coefficients from (1):

$\bm{feature}^{G}_{k, d_k} = \hat{\beta_1}^{w_{d_k},k}\cdot \bm{SNP_{w_{d_k}}}+\ldots +  \hat{\beta_1}^{w_{d_k+G-1},k}\cdot \bm{SNP_{w_{d_k+G-1}}} \quad \textrm{for}  \quad G \in \{100,200\}$, $\quad n_{k_G}=\left\lceil\frac{ |O_k|}{G}\right\rceil,  \quad  d_k =1, \ldots, n_{k_G}, \quad k=1,\ldots,4$.
\end{enumerate}
Then we have 174 (87) continuous features for FF, 92 (46) continuous features for MF, 198 (99) continuous features for PH\_V4, and 183 (93) continuous features for PH\_V6 by grouping $G=100$ ($G=200$). Figure \ref{fig:feature-trait} shows a scatter plot of a continuous feature and a trait.

   \begin{figure}[H]
      \caption{Scatter plots of an extracted continuous feature, combining 100 SNPs in a feature, and the traits. Feature\_2 corresponds to the combination of the SNPs whose p-value from the OLS of the associated trait is higher than 100 SNPs but lower than others. Similar correspondence applies to other features. Orange curves demonstrate a local polynomial regression fit.}
        \centering
            \includegraphics[width=\textwidth]{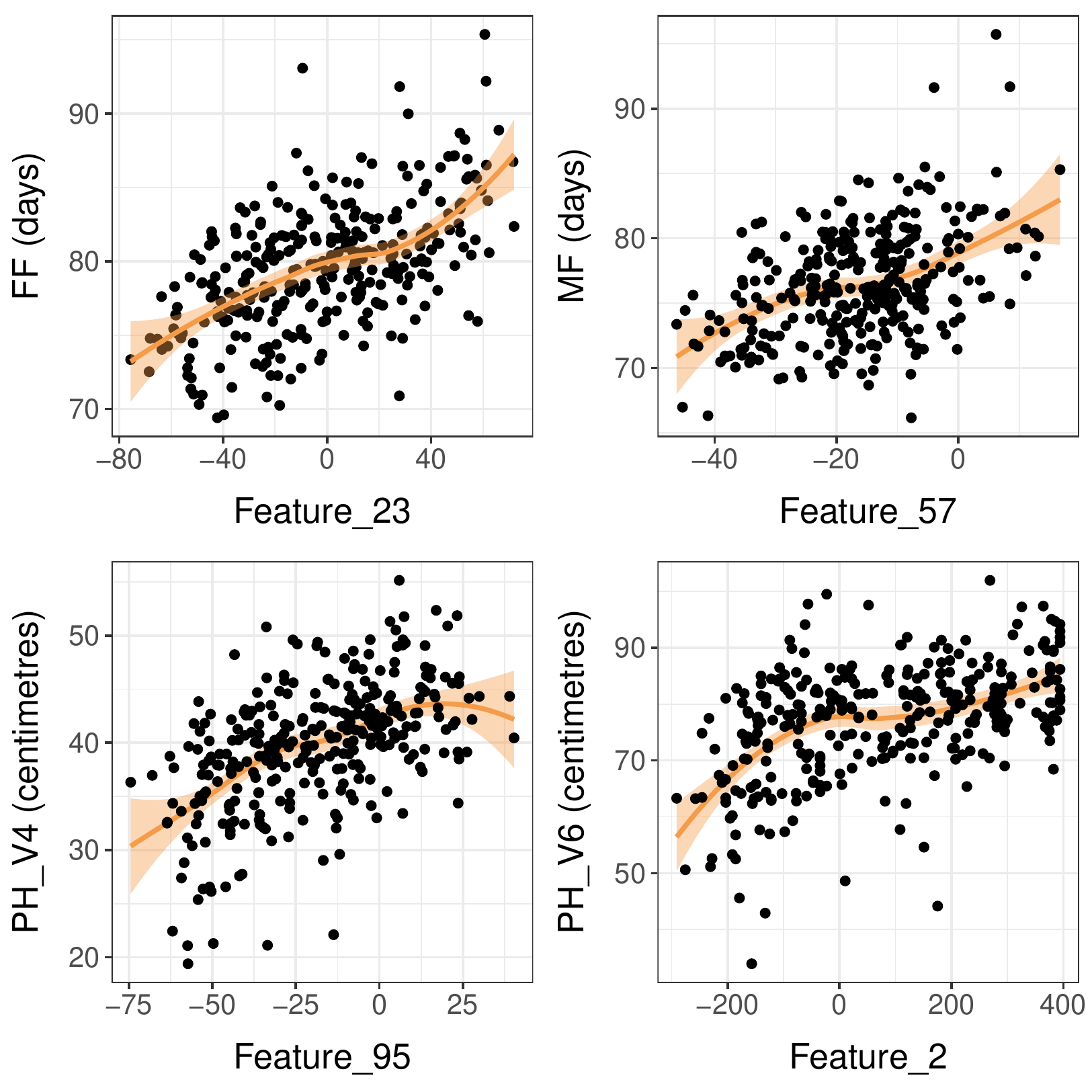}
   \label{fig:feature-trait}
\end{figure}

 \subsection{Prediction}\label{sec:model-fit} 
We have our data $D^{G}_k =(\bm{y}_k, \bm{feature}^{G}_{k,1}, \ldots, \bm{feature}^{G}_{k,n_{k_G}})$ for each response $k=1,\ldots,4$ and $G \in \{100,200\}$.  To identify if a feature is relevant, redundant, or irrelevant, we first conduct a bivariate analysis by fitting a vine copula regression on each feature and trait, i.e., D-vines with two nodes: response and one feature. If a feature is relevant or redundant given the others, our methods add it to the model; otherwise, it is not selected as explained in Section \ref{sec:red-irr}. The bivariate copula family selection between the response and the first feature is conducted as explained in Section \ref{sec:bicopsel}. For instance, we conduct 174 (87) bivariate analyses for FF using a grouping size of  $G=100$ $(200)$. Then all features of four responses are classified as relevant or redundant using  a grouping size of $G=100$ and $G=200$.  Next, we apply our methods to eight different data sets' training sets to find the most relevant features, thereby redundant ones given them. Also, we compare them with $LQRLasso$ and $QRF$ on test sets using the pinball loss defined in Section \ref{sec:perfmeasures} at the levels $0.05, 0.50$, $0.95$. 

Table \ref{tbl:data-res} shows that vine copula based methods perform worse than $LQRLasso$ and $QRF$ for MF. Appendix \ref{app8} shows that dependencies among MF and its selected features by $vineregRes$ are more linear than those among other traits since it fits mostly the Gaussian copula in the first tree for MF. We remark that $LQRLasso$ may perform well if it can avoid crossing quantile curves, but there is no guarantee that the $95\%$ quantile curve exceeds the $90\%$ quantile curve everywhere as shown in Appendix \ref{app9}. Whenever $LQRLasso$ is more accurate than $vineregRes$ for  PH\_V4, it includes more features, giving a trade-off between model sparsity and accuracy. Even though $QRF$ provides the lowest pinball loss at all quantiles for FF for $G=200$, $vineregRes$ has better performance than it for $G=100$, except at the level 95\%. $vineregRes$ is the most sparse and accurate model at all quantile levels considered for PH\_V6 using $G=200$. It chooses two features for $G=200$, identifying more than 95\% of the features as redundant. It has the best accuracy for PH\_V6 for four cases out of six, with three quantile  levels evaluated for two $G$ values.

Given the selected features, the others are redundant for a trait and a grouping of $G$ using $vineregRes$. For instance, given the first and 88th features for PH\_V6 using  a grouping size of $G=200$, the remaining 91 features are redundant using $vineregRes$. Since the features of PH\_V6 are highly dependent but are not needed in a model, in parallel to the simulation study results in Section \ref{sec:simresults}, the reason for our methods' better accuracy than $QRF$ may be many dependent but redundant features for PH\_V6 as detailed in Appendix \ref{app10}.

\begin{table}[H]
  \caption{Comparison of the methods' performance on the test set for the pinball loss ($PL_\alpha$) and on the training set for the number of selected continuous features (No. Ftr.), where (a,b,c) under the $LQRLasso$ column corresponds to the quantile levels (0.05,0.50,0.95). The best performance on the test set for each quantile level $\alpha$, trait, and $G$ is highlighted.}
  \scalebox{0.75}{
\begin{tabular}{llrrrr|rrrr}
\hline
 &  & $vregRes$ & $vregParCor$ & $LQRLasso$ & $QRF$ &$vregRes$ & $vregParCor$ & $LQRLasso$  & $QRF$ \\ \hline
Trait & Measure & \multicolumn{4}{c}{$G=100$} & \multicolumn{4}{c}{$G=200$} \\ \hline
\multirow{4}{*}{FF} &  $PL_{0.05}$ &\hl{0.35}  & 0.49 & 0.40      & 0.38 & 0.39 & 0.39 & 0.39        & \hl{0.37} \\
 & $PL_{0.50}$ & \hl{1.43}  & 1.51 & 1.50      & 1.48 & 1.48 & 1.56 & 1.47        & \hl{1.45} \\
 &  $PL_{0.95}$&  0.47  & 0.47 & \hl{0.41}     & 0.38 & 0.41 & 0.43 & \hl{0.39}        & \hl{0.39}\\
 & No. Ftr. & 11 & 22 & (8,41,4) & 174 & 4 & 14 & (8,29,5) & 87 \\ \hline
\multirow{4}{*}{MF} &  $PL_{0.05}$ &0.35  & 0.36 & 0.34     & \hl{0.33} & 0.35 & 0.36 & \hl{0.32}        & 0.34 \\
 & $PL_{0.50}$ & 1.41  & 1.42 & 1.39     & \hl{1.36} & 1.39 & 1.40  & \hl{ 1.36}        & 1.37 \\
 &  $PL_{0.95}$ & 0.45  & 0.47 & 0.41     & \hl{0.39} & 0.44 & 0.45 & 0.40        &  \hl{0.39} \\
 & No. Ftr. & 12 & 16 & (7,45,8) & 92 & 8 & 13 & (5,15,12) & 46 \\ \hline
\multirow{4}{*}{PH\_V4} &  $PL_{0.05}$ & \hl{0.51}  & \hl{0.51} & 0.55     & 0.55 &  \hl{0.51} &  0.55 & 0.56        & 0.55 \\
 & $PL_{0.50}$ & 1.93  & \hl{1.87} & 1.92     & 1.94 & 1.96 & 1.99 &  \hl{1.92}        & 1.94 \\
 &  $PL_{0.95}$ &0.56  & 0.58 & \hl{0.55}     & 0.60 & 0.57 & 0.57 &  \hl{0.55}       & 0.62 \\
 & No. Ftr.& 6 & 11 & (9,15,8) & 198 & 3 & 11 & (7,17,4) & 99 \\ \hline
\multirow{4}{*}{PH\_V6} &  $PL_{0.05}$ & 1.01  & 1.01 & \hl{0.98}     & 1.00 &  \hl{0.96} & 0.98 & 1.05        & 1.00 \\
 & $PL_{0.50}$ &3.09  & 3.10  & \hl{3.04}     & 3.27 &  \hl{3.06} & 3.47 & 3.14        & 3.31 \\
  &  $PL_{0.95}$& 0.91  & \hl{0.89} & 0.92     & 0.97 & \hl{0.90} & 1.05 & 0.94        & 1.04 \\
 & No. Ftr. & 4 & 12 & (8,49,5) & 183 & 2 & 12 & (6,29,6) & 93\\ \hline
\end{tabular}}
 \normalsize 
\label{tbl:data-res}
\end{table}

Our SNP screening and feature extraction steps are similar to  \textcite{qian2020fast}. Even though they fit a simple linear regression on the first feature, which is based on the linearly and marginally most important SNPs, we conclude in Appendix \ref{app10} that the linearly and marginally most important SNP group might not be considered the most relevant for prediction when allowing nonlinear dependencies as in our methods.

\section{Discussion and conclusion}\label{sec:disc} 
High-dimensional sparse vine copula regression is a significant tool for efficiently allowing nonlinear relationships between explanatory variables and response and selecting relevant variables. In genomic prediction, genotypic measurements like SNPs are often very high-dimensional, which might be reduced by considering some SNP groups and their interactions. Also, many groups may be irrelevant for prediction. Our methods can handle such situations and predict responses at different quantile levels. Their performance might be improved with bivariate copula families having more asymmetries, e.g., more than two parameters.

For our application, consider the following question: Which SNPs impact the low and high quantiles of the trait PH\_V6? $vineregRes$ identifies two SNP groups (features) that consist of 400 SNPs in total. In the first feature, the corresponding SNPs' p-values out of the linear regression with the trait PH\_V6 are in the range [$10^{-12}$, $10^{-7}$], whereas its range is $[0.087, 0.090]$ in the 88th feature. Thus, the marginal impacts of the selected SNPs differ. Given these SNPs, others are redundant to predict the trait PH\_V6.  Thus, plant breeders can assess the selected SNP groups' impact on the trait's various quantile levels and identify the associated SNPs using our methods. Chosen SNP groups can be compared to other genome-wide association studies, helping breeders to decide on future genotype adoption. Hence, comparing the identified SNPs with those in \textcite{mayer2020discovery} is high on the agenda.

Feature extraction is a vital step that may impact our methods' genomic prediction power. For instance, the choice of SNPs' weights estimating their latent variable is open to future research. Also, even though it offers a trade-off between a computational burden and prediction power, one can apply cross-validation for the choice of the SNP's group size $G$. In addition, some SNPs might affect the trait, not marginally only in the presence of certain other SNPs. Alternatively, one may remove the $P$-value screening of the SNPs at the 10\% level described in Section \ref{sec:data-prep} and consider all possible extracted features. Likewise, some SNPs might influence the trait marginally, but not when certain other SNPs are in the model. For such cases, some post-processing steps for feature extraction might be applied. 

Finally, our variable selection steps can be adapted for more flexible vine tree structures.

An {\fontfamily{pcr}\selectfont R}  package, called {\fontfamily{pcr}\selectfont sparsevinereg}, containing the implementation for $vineregRes$ and $vineregParCor$ is given on GitHub: \url{https://github.com/oezgesahin/sparsevinereg}.
\section*{Acknowledgments}
The Munich Data Science Institute funds this research through Seed Funding 2021. Claudia Czado acknowledges the support of the German Research Foundation (DFG grant CZ 86/6-1). We thank Chris-Carolin Sch{\"o}n for funding acquisition, Harry Joe for helpful suggestions, and Manfred Mayer for useful comments on the data. We are grateful to anonymous referees and the associate editor for comments leading to a considerably improved contribution.

\appendix

\section{D-vine copulas}\label{app1}
   \subsection{D-vine example}
   
   \begin{example} Figure \ref{fig:vine} shows the graphical specification for a 4-dimensional D-vine in the form of a set of linked trees. The D-vine with four variables has three trees, and the $j$th tree has $5-j$ nodes and $4-j$ edges, $j = 1, 2, 3$. The first and fourth variables are leaf nodes in the first tree. To model the dependence between a pair of variables, each edge is associated with a pair-copula. Therefore, in the first tree, the pair-copula densities, $c_{1,2}$, $c_{2,3}$, and $c_{3,4}$ model the dependence of the three pairs of variables, (1,2), (2,3), and (3,4), respectively. In the second tree, two conditional dependencies are modeled: between the first and third variables given the second variable, the pair (1,3;2), using associated pair-copula density $c_{1,3;2}$, and between the second and fourth variables given the third variable, the pair (2,4;3), using associated pair-copula density $c_{2,4;3}$. Likewise, the third tree models the conditional dependence between the first and fourth variables given the second and third variables using the associated pair-copula density $c_{1,4;2,3}$.
\begin{figure}[H]
	\centering
	\caption{Example of a 4-dimensional D-vine.}
	\renewcommand{\xshiftNodes}{0.15*\linewidth}
	\renewcommand{\yshiftLabels}{.0cm}  
\begin{tikzpicture}	[every node/.style = VineNode, node distance =1.4cm,scale=0.7, transform shape]
\node[TreeLabels] (T1) {}
node             (v1)         [right of = T1] {1}			
node             (v2)         [right of = v1, xshift = \xshiftNodes] {2}
node             (v3)         [right of = v2, xshift = \xshiftNodes] {3}
node             (v4)         [right of = v3, xshift = \xshiftNodes] {4};
\draw[color=black] (v1) to node[draw=none,  font = \labelsize,fill = none, above, yshift = \yshiftLabels] {1,2} (v2);
\draw[color=black] (v2) to node[draw=none,  font = \labelsize,fill = none, above, yshift = \yshiftLabels] {2,3} (v3);
\draw[color=black] (v3) to node[draw=none,  font = \labelsize,fill = none, above, yshift = \yshiftLabels] {3,4} (v4);
\node[TreeLabels] (T2)      [below of = T1] {}
node             (v12)         [right of = T2, xshift =\xshiftNodes] {1,2}
node             (v23)         [right of = v12, xshift =\xshiftNodes] {2,3}
node             (v34)         [right of = v23, xshift =\xshiftNodes] {3,4};
\draw[color=black] (v12) to node[draw=none,  font = \labelsize,fill = none, above, yshift = \yshiftLabels] {1,3;2} (v23);
\draw[color=black] (v23) to node[draw=none,  font = \labelsize,fill = none, above, yshift = \yshiftLabels] {2,4;3} (v34);
\node[TreeLabels] (T3)      [below of = T2] {}
node             (v132)         [right of = T3, xshift =2*\xshiftNodes] {1,3;2}
node             (v243)         [right of = v132, xshift =\xshiftNodes] {2,4;3};
\draw[color=black] (v132) to node[draw=none,  font = \labelsize,fill = none, above, yshift = \yshiftLabels] {1,4;2,3} (v243);
\end{tikzpicture}
\label{fig:vine}
\end{figure}
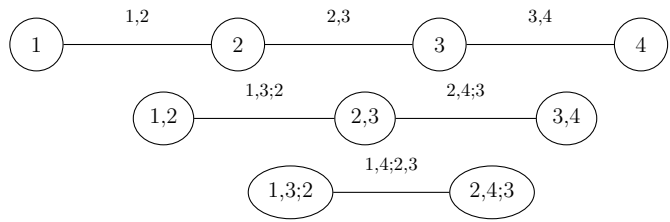

 In addition, if $c_{1,4;2,3}$ is 1 everywhere in the domain, i.e., independence copula, the resulting D-vine is a $2$-truncated vine.
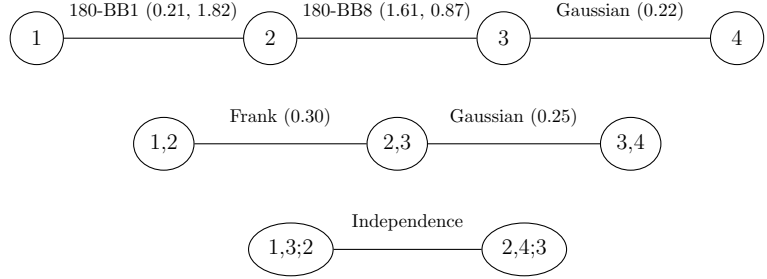
\begin{figure}[H]
	\caption{Example of a 4-dimensional $2$-truncated  D-vine. A letter at an edge with numbers inside the parenthesis refers to its bivariate copula family and rotation with its estimated parameters.}
	\centering

	\renewcommand{\xshiftNodes}{0.15*\linewidth}
	\renewcommand{\yshiftLabels}{.0cm}  
\begin{tikzpicture}	[every node/.style = VineNode, node distance =2cm,scale=0.7, transform shape]
\node[TreeLabels] (T1) {}
node             (v1)         [right of = T1] {1}			
node             (v2)         [right of = v1, xshift = \xshiftNodes] {2}
node             (v3)         [right of = v2, xshift = \xshiftNodes] {3}
node             (v4)         [right of = v3, xshift = \xshiftNodes, fill=none] {4};
\draw[color=black] (v1) to node[draw=none,  font = \labelsize,fill = none, above, yshift = \yshiftLabels] {180-BB1 (0.21, 1.82)} (v2);
\draw[color=black] (v2) to node[draw=none,  font = \labelsize,fill = none, above, yshift = \yshiftLabels] {180-BB8 (1.61, 0.87)} (v3);
\draw[color=black] (v3) to node[draw=none,  font = \labelsize,fill = none, above, yshift = \yshiftLabels] {Gaussian (0.22)} (v4);
\node[TreeLabels] (T2)      [below of = T1] {}
node             (v12)         [right of = T2, xshift =\xshiftNodes] {1,2}
node             (v23)         [right of = v12, xshift =\xshiftNodes] {2,3}
node             (v24)         [right of = v23, xshift =\xshiftNodes, fill=none] {3,4};
\draw[color=black] (v12) to node[draw=none,  font = \labelsize,fill = none, above, yshift = \yshiftLabels] {Frank (0.30)} (v23);
\draw[color=black] (v23) to node[draw=none,  font = \labelsize,fill = none, above, yshift = \yshiftLabels] {Gaussian (0.25)} (v24);
\node[TreeLabels] (T3)      [below of = T2] {}
node             (v132)         [right of = T3, xshift =2*\xshiftNodes, fill=none] {1,3;2}
node             (v243)         [right of = v132, xshift =\xshiftNodes, fill=none] {2,4;3};
\draw[color=black] (v132) to node[draw=none,  font = \labelsize,fill = none, above, yshift = \yshiftLabels] {Independence} (v243);
\end{tikzpicture}
\end{figure}

\end{example}

\subsection{Properties of D-vine copulas}
Assume that $(y_i, x_{i,1}, \ldots, x_{i,p})$, $i=1, \ldots, n$, are realizations of the random vector \\$(Y, X_1, \ldots, X_p)$, and $Y$ denotes a response variable with its marginal distributions $F_Y$ and the others correspond to explanatory variables with their marginal distributions $F_1, \ldots, F_p$. For the D-vine copula with the node order $0-1-\ldots-p$ corresponding to the variables $Y-X_1-\ldots-X_p$, $p\geq2$, the conditional density of the response given the explanatory variables $f_{0|1,\ldots,p}$ can be obtained as:
\begin{equation}
\begin{aligned}
f_{0|1,\ldots,p}(y|x_1,\ldots, x_p)= \big[ \prod_{j=2}^{p} & c_{0, j;1, \ldots, j-1}\big( F_{0 | 1, \ldots, j-1}(y | x_{1}, \ldots, x_{j-1}),  \\ & F_{j | 1, \ldots, j-1}(x_{j} | x_1, \ldots, x_{j-1}\big)\big] \cdot c_{0,1}\big(F_Y(y), F_1(x_1)\big) \cdot f_{Y}(y).
\label{eq:d-vine-cond-dens}
\end{aligned}
\end{equation}
where $F_{0| 1, \ldots, j-1}$ is the conditional distribution function of the random variable $Y|X_{1}=x_{1}, \ldots, X_{j-1} =x_{j-1}$, which can be calculated recursively using h-functions \parencite{joe1996families}. We assume that the pair copulas in the higher tree levels than one do not depend on the conditioning values, e.g., it holds that $c_{0, j;1, \ldots, j-1}(.,.) = c_{0, j;1, \ldots, j-1}(.,.;x_{1}, \ldots, x_{j-1})$. From now on, we make this simplifying assumption. Still, the pair copula density of the high tree levels depends on the conditioning value through its arguments \parencite{stoeber2013simplified}. 

Further, we can calculate the conditional log-likelihood \\$cll_{0|1,\ldots,p}(F_Y(y),F_1(x_1) \ldots, F_p(x_p))$ based on the D-vine copula as follows:
\begin{align*}
cll_{0|1,\ldots,p}(.) = \sum_{j=2}^{p} &\big[\log c_{0, j;1, \ldots, j-1}\big( F_{0 | 1, \ldots, j-1}(y | x_{1}, \ldots, x_{j-1}),  
F_{j | 1, \ldots, j-1}(x_{j} | x_1, \ldots, x_{j-1})\big]
\\ & + \log c_{0,1}\big(F_Y(y), F_1(x_1)) + log f_{Y}(y).
\end{align*}

\section{$vinereg$}\label{app2}

The {\fontfamily{pcr}\selectfont R} package {\fontfamily{pcr}\selectfont vinereg} \parencite{nagler2018vinereg} provides the implementation of a D-vine copula based quantile regression approach of \textcite{kraus2017d}.

Assume the response variable's index is denoted by $0$, and the response variable is a leaf node in the first tree of a D-vine. 
 
    $Step$ 1 (initialization): For the given data $(y_i, x_{i,1}, \ldots, x_{i,p})$ and $(v_i, u_{i, 1}, \ldots, u_{i,p} )$, the initial D-vine order $\mathcal{D}^{(1)}=(0)$, the initial chosen variable index set $\mathcal{I}^{(1)}_{var}=\emptyset$, and the initial set of candidate explanatory variables $p^{(2)}_{cand} = \{1, \ldots, p\}$.

    For $s=1, 2, \ldots,$
    
    $Step$ 2 (variable selection): Extend the D-vine structure order by adding a variable whose index is $d$ to have a D-vine structure order $\mathcal{D}^{(s+1)}$ = $(\mathcal{D}^{(s)}, d)$. Fit a parametric D-vine copula having the structure $\mathcal{D}^{(s+1)}$ and denote the vine copula, its density, and its estimated parameters by $\widehat{CD}^{d_{(s+1)}}$, $\widehat{cd}^{d_{(s+1)}}$, and $\bm{\hat{\theta}}^{d_{(s+1)}}$, respectively. Then find the variable with the index $d^{*}_{(s+1)}$ for which the conditional log-likelihood of the D-vine copula $\widehat{CD}^{d^{*}_{(s+1)}}$ is maximized, i.e.,
    \begin{equation}
         d^{*}_{(s+1)} = \argmax_{d_{(s+1)} \in p^{(s+1)}_{cand}} \sum_{i=1}^{n} \textrm{ln } \widehat{cd}^{d_{(s+1)}}_{0 | d_{(0)}, \ldots, d_{(s+1)}}(v_i | u_{i, d_{(0)}}, \ldots, u_{i, d_{(s+1)}}; \bm{\hat{\theta}}^{d_{(s+1)}}).
         \label{eq:cll-vinereg}
    \end{equation}
  $Step$ 3 (D-vine extension): Extend the D-vine structure order by adding the variable whose index is $d^{*}_{(s+1)}$ to have a D-vine structure order $\mathcal{D}^{(s+1)}$ = $(\mathcal{D}^{(s)}, d^{*}_{(s+1)})$. Select the new parametric pair-copula families and estimate their parameters in the extended D-vine structure. Denote the associated D-vine copula  by $\hat{C}^{(s+1)}$  and its estimated parameters by $\bm{\hat{\theta}}^{(s+1)}$. 
  
   $Step$ 4 (Chosen variable indices and hyperparameter updates): Extend the chosen variable indices $\mathcal{I}^{(s+1)}_{var}=\mathcal{I}^{(s)}_{var} \cup d^{*}_{(s+1)}$ and update $p^{(s+2)}_{cand} = p^{(s+1)}_{cand} \setminus d^{*}_{(s+1)}$.


    \section{$vinereg$, $vineregRes$, and $vineregParCor$}\label{app3}
    \subsection{Illustrative example}
Now we illustrate how $vinereg$, $vineregRes$, and $vineregParCor$ select variables. We simulate the data with 450 observations from the model $(Y,X_1,X_2,X_3, X_4)^\top \sim \mathcal{N}_5(\bm{0},\Sigma)$ with $(0.40, 0.70, 0.32, 0, 0, 0, 0, 0, 0, 0)$ vectorizing the upper triangular part of symmetric matrix $\Sigma$. Thus, there are a response variable and four explanatory variables in the data, and the third and fourth variables are assumed to be irrelevant for predicting the response. 

First, we convert the observations on the copula scale $(v_i, u_{i, 1}, u_{i,2}, u_{i,3}, u_{i,4})$, $i=1, \ldots, 450$, using the non-parametric kernel density estimator. Then we define the initial pseudo-response on the copula scale $\tilde{v}^{(1)}_i=v_i$, $i=1, \ldots, 450$, the initial D-vine order $\mathcal{D}^{(1)}=(0)$, the initial chosen variable index set $\mathcal{I}^{(1)}_{var}=\emptyset$, the initial set of candidate explanatory variables $p^{(1)}_{cand} = \{1,2,3,4\}$, and the given data's normal scores $(z_{i,0}, z_{i, 1}, z_{i,2}, z_{i,3}, z_{i,4})^{\top}$, $i=1, \ldots, 450$. 

For an illustration purpose, we will show the methods' variable selection step in the third iteration, where it holds that the D-vine order $\mathcal{D}^{(3)}=(0, 2, 1)$ for the D-vine copula $\hat{C}^{(3)}$ with the parameters $\bm{\hat{\theta}}^{(3)}$, the chosen variable index set $\mathcal{I}^{(3)}_{var}=\{2, 1\}$, and the set of candidate explanatory variables $p^{(3)}_{cand} = \{3, 4\}$. The conditional AIC of the current D-vine copula $\hat{C}^{(3)}$ shown in Figure \ref{fig:vine-step3} is 975.30. The next iteration for all methods is to decide which and if the third or fourth variable should be added to the model. 

\begin{figure}[H]
	\centering
	\caption{The D-vine copula $\hat{C}^{(3)}$ with the D-vine order $\mathcal{D}^{(3)}=(0, 2, 1)$. A letter at an edge with numbers inside the parenthesis refers to its bivariate copula family and rotation with its estimated parameters.}
	\renewcommand{\xshiftNodes}{0.15*\linewidth}
	\renewcommand{\yshiftLabels}{.0cm}  
\begin{tikzpicture}	[every node/.style = VineNode, node distance =1.4cm,scale=0.7, transform shape]
\node[TreeLabels] (T1) {}
node             (v1)         [right of = T1] {0}			
node             (v2)         [right of = v1, xshift = \xshiftNodes] {2}
node             (v3)         [right of = v2, xshift = \xshiftNodes] {1};
\draw[color=black] (v1) to node[draw=none,  font = \labelsize,fill = none, above, yshift = \yshiftLabels] {180-BB1 (0.21, 1.82)} (v2);
\draw[color=black] (v2) to node[draw=none,  font = \labelsize,fill = none, above, yshift = \yshiftLabels] {180-BB8 (1.61, 0.87)} (v3);
\node[TreeLabels] (T2)      [below of = T1] {}
node             (v12)         [right of = T2, xshift =\xshiftNodes] {0,2}
node             (v23)         [right of = v12, xshift =\xshiftNodes] {1,2};
\draw[color=black] (v12) to node[draw=none,  font = \labelsize,fill = none, above, yshift = \yshiftLabels] {Gaussian (0.22)} (v23);
\end{tikzpicture}
\label{fig:vine-step3}
\end{figure}
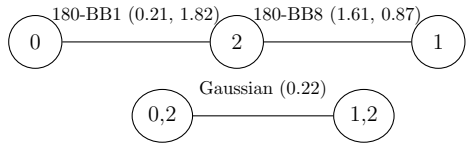

To make the variable selection, $vinereg$ extends the D-vine order as $\mathcal{D}^{(4)}=(0, 2, 1, 3)$ for the third variable and $\mathcal{D}^{(4)}=(0, 2, 1, 4)$ for the fourth  variable. Accordingly, three new pair-copula families are selected for each vine  order, thereby estimating their parameters as shown in Figure \ref{fig:vine-step4}. Thus, $vinereg$ selects six bivariate copulas in this iteration (one for the first/second/third tree level for each D-vine order) to choose one variable. 

\begin{figure}[H]
	\caption{$\hat{CD}^{3_{(4)}}$ with the conditional AIC of 975.30 (left) and $\hat{CD}^{4_{(4)}}$ with the conditional AIC of 975.30 (right) selected and estimated by $vinereg$. A letter at an edge with numbers inside the parenthesis refers to its bivariate copula family and rotation with its estimated parameters. The new selected pair-copula families and nodes compared to the previous step's D-vine copula $\hat{C}^{(3)}$  are shown in orange.}
\begin{subfigure}[b]{0.45\linewidth}
	\centering

	\renewcommand{\xshiftNodes}{0.15*\linewidth}
	\renewcommand{\yshiftLabels}{.0cm}  
\begin{tikzpicture}	[every node/.style = VineNode, node distance =2cm,scale=0.7, transform shape]
\node[TreeLabels] (T1) {}
node             (v1)         [right of = T1] {0}			
node             (v2)         [right of = v1, xshift = \xshiftNodes] {2}
node             (v3)         [right of = v2, xshift = \xshiftNodes] {1}
node             (v4)         [right of = v3, xshift = \xshiftNodes, fill=orange] {3};
\draw[color=black] (v1) to node[draw=none,  font = \labelsize,fill = none, above, yshift = \yshiftLabels] {180-BB1 (0.21, 1.82)} (v2);
\draw[color=black] (v2) to node[draw=none,  font = \labelsize,fill = none, above, yshift = \yshiftLabels] {180-BB8 (1.61, 0.87)} (v3);
\draw[color=orange] (v3) to node[draw=none,  font = \labelsize,fill = none, above, yshift = \yshiftLabels] {Independence} (v4);
\node[TreeLabels] (T2)      [below of = T1] {}
node             (v12)         [right of = T2, xshift =\xshiftNodes] {0,2}
node             (v23)         [right of = v12, xshift =\xshiftNodes] {1,2}
node             (v24)         [right of = v23, xshift =\xshiftNodes, fill=orange] {2,3};
\draw[color=black] (v12) to node[draw=none,  font = \labelsize,fill = none, above, yshift = \yshiftLabels] {Gaussian (0.22)} (v23);
\draw[color=orange] (v23) to node[draw=none,  font = \labelsize,fill = none, above, yshift = \yshiftLabels] {Independence} (v24);
\node[TreeLabels] (T3)      [below of = T2] {}
node             (v132)         [right of = T3, xshift =2*\xshiftNodes, fill=orange] {0,1;2}
node             (v243)         [right of = v132, xshift =\xshiftNodes, fill=orange] {1,3;2};
\draw[color=orange] (v132) to node[draw=none,  font = \labelsize,fill = none, above, yshift = \yshiftLabels] {Independence} (v243);
\end{tikzpicture}
\end{subfigure}
\begin{subfigure}[b]{0.45\linewidth}
	\centering

	\renewcommand{\xshiftNodes}{0.15*\linewidth}
	\renewcommand{\yshiftLabels}{.0cm}  
\begin{tikzpicture}	[every node/.style = VineNode, node distance =2cm,scale=0.7, transform shape]
\node[TreeLabels] (T1) {}
node             (v1)         [right of = T1] {0}			
node             (v2)         [right of = v1, xshift = \xshiftNodes] {2}
node             (v3)         [right of = v2, xshift = \xshiftNodes] {1}
node             (v4)         [right of = v3, xshift = \xshiftNodes, fill=orange] {4};
\draw[color=black] (v1) to node[draw=none,  font = \labelsize,fill = none, above, yshift = \yshiftLabels] {180-BB1 (0.21, 1.82)} (v2);
\draw[color=black] (v2) to node[draw=none,  font = \labelsize,fill = none, above, yshift = \yshiftLabels] {180-BB8 (1.61, 0.87)} (v3);
\draw[color=orange] (v3) to node[draw=none,  font = \labelsize,fill = none, above, yshift = \yshiftLabels] {Independence} (v4);
\node[TreeLabels] (T2)      [below of = T1] {}
node             (v12)         [right of = T2, xshift =\xshiftNodes] {0,2}
node             (v23)         [right of = v12, xshift =\xshiftNodes] {1,2}
node             (v24)         [right of = v23, xshift =\xshiftNodes, fill=orange] {2,4};
\draw[color=black] (v12) to node[draw=none,  font = \labelsize,fill = none, above, yshift = \yshiftLabels] {Gaussian (0.22)} (v23);
\draw[color=orange] (v23) to node[draw=none,  font = \labelsize,fill = none, above, yshift = \yshiftLabels] {Independence} (v24);
\node[TreeLabels] (T3)      [below of = T2] {}
node             (v132)         [right of = T3, xshift =2*\xshiftNodes, fill=orange] {0,1;2}
node             (v243)         [right of = v132, xshift =\xshiftNodes, fill=orange] {1,4;2};
\draw[color=orange] (v132) to node[draw=none,  font = \labelsize,fill = none, above, yshift = \yshiftLabels] {Independence} (v243);
\end{tikzpicture}
\end{subfigure}
\label{fig:vine-step4}
\end{figure}

To perform the variable selection, first, $vineregRes$ estimates the median of the response based on the D-vine copula $\hat{C}^{(3)}$. The analysis of other quantiles than the median using the DGP1 setting in Section 3.1 of the manuscript is given in Section \ref{sec:quantiles}. Then $vineregRes$  calculates the pseudo-response and estimates the pseudo-response's marginal distribution nonparametrically. Next, the associated pseudo-response on the copula scale is obtained by the PIT, i.e.,

    $
    \tilde{y}_i^{(3)} =y_i - \hat{F}^{-1}_{Y}(\hat{C}^{-1(3)}_{0 | 2, 1}(0.50 |u_{i, 2}, u_{i, 1}; \bm{\hat{\theta}}^{(3)} )), \quad 
    \tilde{v}_i^{(3)} = \hat{F}_{Y^{(3)}}(\tilde{y}_i^{(3)} ), \quad i=1, \ldots, 450.
    $
    
Then, it fits a parametric bivariate copula to data \{$(\tilde{v}^{(3)}_i, u_{i, 3})$, $i=1, \ldots, n$\} for the third variable and \{$(\tilde{v}^{(3)}_i, u_{i, 4})$, $i=1, \ldots, n$\} for the fourth variable. It holds that the fitted copula for the third variable $\widehat{CR}^{3_{(3)}}$ is a Frank copula with the estimated parameter $\hat{\theta}^{3_{(3)}}=-0.41$. The conditional log-likelihood of $\widehat{CR}^{3_{(3)}}$ conditioned on the third variable is -474.27. Moreover, the fitted copula for the fourth variable $\widehat{CR}^{4_{(3)}}$ is the independence copula, resulting in the conditional log-likelihood of -475.30. They are shown in Figure \ref{fig:vineRes-step4}. 

\begin{figure}[H]
	\caption{$\hat{CR}^{3_{(3)}}$ with the conditional log-likelihood of -474.20 (left) and $\hat{CR}^{4_{(3)}}$ with the conditional log-likelihood of -474.27 (right) selected and estimated by $vineregRes$. A letter at an edge with numbers inside the parenthesis refers to its bivariate copula family with its estimated parameters. The new selected pair-copula families and nodes are shown in orange.}
\begin{subfigure}[b]{0.45\linewidth}
	\centering
	\renewcommand{\xshiftNodes}{0.15*\linewidth}
	\renewcommand{\yshiftLabels}{.0cm}  
\begin{tikzpicture}	[every node/.style = VineNode, node distance =2cm,scale=0.7, transform shape]
\node[TreeLabels] (T1) {}
node             (v1)         [right of = T1] {$\tilde{\bm{V}}^{(3)}$}			
node             (v2)         [right of = v1, xshift = \xshiftNodes, fill=orange] {3};
\draw[color=orange] (v1) to node[draw=none,  font = \labelsize,fill = none, above, yshift = \yshiftLabels] {Frank (-0.41)} (v2);
\end{tikzpicture}
\end{subfigure}
\qquad
\begin{subfigure}[b]{0.45\linewidth}
	\centering
	\renewcommand{\xshiftNodes}{0.15*\linewidth}
	\renewcommand{\yshiftLabels}{.0cm}  
\begin{tikzpicture}	[every node/.style = VineNode, node distance =2cm,scale=0.7, transform shape]
\node[TreeLabels] (T1) {}
node             (v1)         [right of = T1] {$\tilde{\bm{V}}^{(3)}$}			
node             (v2)         [right of = v1, xshift = \xshiftNodes, fill=orange] {4};
\draw[color=orange] (v1) to node[draw=none,  font = \labelsize,fill = none, above, yshift = \yshiftLabels] {Independence} (v2);
\end{tikzpicture}
\end{subfigure}
\label{fig:vineRes-step4}
\end{figure}
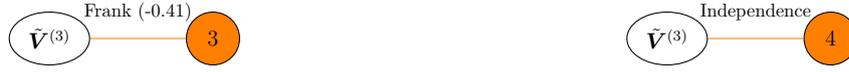

Since the conditional log-likelihood of $\widehat{CR}^{3_{(3)}}$ is higher than that of $\widehat{CR}^{4_{(3)}}$, $vineregRes$ identifies the third variable as the candidate explanatory variable to be added to the D-vine copula. Thus, it extends the D-vine order as $\mathcal{D}^{(4)}=(0, 2, 1, 3)$, selecting the new pair-copula families and estimating new parameters as shown in Figure \ref{fig:vineRes-step4-2}. The resulting D-vine copula $\hat{C}^{(4)}$ has the conditional AIC of  975.30. Since $\hat{C}^{(4)}$ does not have a better conditional AIC than $\hat{C}^{(3)}$, $vineregRes$ stops the iterations and returns the final D-vine copula as $\hat{C}^{(3)}$. Thus, $vineregRes$ chooses five bivariate copulas here.

\begin{figure}[H]
	\caption{The D-vine copula $\hat{C}^{(4)}$ with the conditional AIC of 975.30 selected and estimated by $vineregRes$.  A letter at an edge with numbers inside the parenthesis refers to its bivariate copula family and rotation with its estimated parameters. The new selected pair-copula families and nodes compared to the previous step's D-vine copula $\hat{C}^{(3)}$  are shown in orange.}
	\centering

	\renewcommand{\xshiftNodes}{0.15*\linewidth}
	\renewcommand{\yshiftLabels}{.0cm}  
\begin{tikzpicture}	[every node/.style = VineNode, node distance =2cm,scale=0.7, transform shape]
\node[TreeLabels] (T1) {}
node             (v1)         [right of = T1] {0}			
node             (v2)         [right of = v1, xshift = \xshiftNodes] {2}
node             (v3)         [right of = v2, xshift = \xshiftNodes] {1}
node             (v4)         [right of = v3, xshift = \xshiftNodes, fill=orange] {3};
\draw[color=black] (v1) to node[draw=none,  font = \labelsize,fill = none, above, yshift = \yshiftLabels] {180-BB1 (0.21, 1.82)} (v2);
\draw[color=black] (v2) to node[draw=none,  font = \labelsize,fill = none, above, yshift = \yshiftLabels] {180-BB8 (1.61, 0.87)} (v3);
\draw[color=orange] (v3) to node[draw=none,  font = \labelsize,fill = none, above, yshift = \yshiftLabels] {Independence} (v4);
\node[TreeLabels] (T2)      [below of = T1] {}
node             (v12)         [right of = T2, xshift =\xshiftNodes] {0,2}
node             (v23)         [right of = v12, xshift =\xshiftNodes] {1,2}
node             (v24)         [right of = v23, xshift =\xshiftNodes, fill=orange] {2,3};
\draw[color=black] (v12) to node[draw=none,  font = \labelsize,fill = none, above, yshift = \yshiftLabels] {Gaussian (0.22)} (v23);
\draw[color=orange] (v23) to node[draw=none,  font = \labelsize,fill = none, above, yshift = \yshiftLabels] {Independence} (v24);
\node[TreeLabels] (T3)      [below of = T2] {}
node             (v132)         [right of = T3, xshift =2*\xshiftNodes, fill=orange] {0,1;2}
node             (v243)         [right of = v132, xshift =\xshiftNodes, fill=orange] {1,3;2};
\draw[color=orange] (v132) to node[draw=none,  font = \labelsize,fill = none, above, yshift = \yshiftLabels] {Independence} (v243);
\end{tikzpicture}
\label{fig:vineRes-step4-2}
\end{figure}
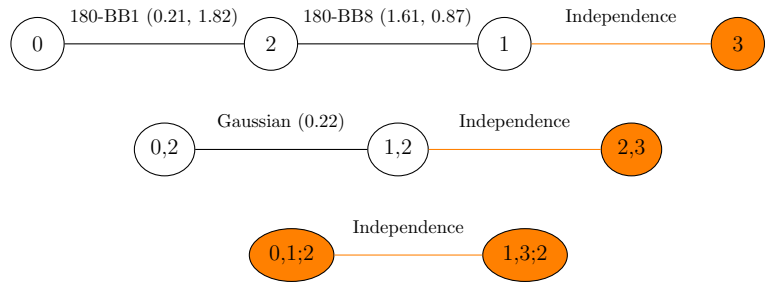

$vineregParCor$ chooses the next variable based on its partial correlation with the response given the chosen explanatory variables using their normal scores. It estimates $\hat{\rho}_{0,3;2,1}=0.05$ for the third variable and $\hat{\rho}_{0,4;2,1}=0.03$ for the fourth variable. Since the third variable has a higher (absolute) estimated partial correlation than the other, $vineregParCor$ extends the D-vine order as $\mathcal{D}^{(4)}=(0, 2, 1, 3)$. Next, it selects the new pair-copula families and performs new parameter estimation associated with $\mathcal{D}^{(4)}$ as shown in Figure \ref{fig:vineParCor-step4}. Since the D-vine copula with $\mathcal{D}^{(4)}=(0, 2, 1, 3)$ does not have a better conditional AIC than $\mathcal{D}^{(3)}=(0, 2, 1)$, $vineregParCor$ gives the final model fit  $\hat{C}^{(3)}$. Thus, $vineregParCor$ selects three bivariate copulas in this iteration.

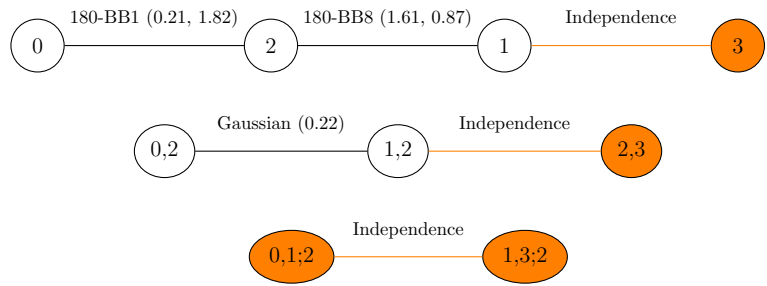
\begin{figure}[H]
	\caption{The D-vine copula $\hat{C}^{(4)}$ with the conditional AIC of 975.30 selected and estimated by $vineregParCor$.  A letter at an edge with numbers inside the parenthesis refers to its bivariate copula family and rotation with its estimated parameters. The new selected pair-copula families and nodes compared to the previous step's D-vine copula $\hat{C}^{(3)}$  are shown in orange.}
	\centering

	\renewcommand{\xshiftNodes}{0.15*\linewidth}
	\renewcommand{\yshiftLabels}{.0cm}  
\begin{tikzpicture}	[every node/.style = VineNode, node distance =2cm,scale=0.7, transform shape]
\node[TreeLabels] (T1) {}
node             (v1)         [right of = T1] {0}			
node             (v2)         [right of = v1, xshift = \xshiftNodes] {2}
node             (v3)         [right of = v2, xshift = \xshiftNodes] {1}
node             (v4)         [right of = v3, xshift = \xshiftNodes, fill=orange] {3};
\draw[color=black] (v1) to node[draw=none,  font = \labelsize,fill = none, above, yshift = \yshiftLabels] {180-BB1 (0.21, 1.82)} (v2);
\draw[color=black] (v2) to node[draw=none,  font = \labelsize,fill = none, above, yshift = \yshiftLabels] {180-BB8 (1.61, 0.87)} (v3);
\draw[color=orange] (v3) to node[draw=none,  font = \labelsize,fill = none, above, yshift = \yshiftLabels] {Independence} (v4);
\node[TreeLabels] (T2)      [below of = T1] {}
node             (v12)         [right of = T2, xshift =\xshiftNodes] {0,2}
node             (v23)         [right of = v12, xshift =\xshiftNodes] {1,2}
node             (v24)         [right of = v23, xshift =\xshiftNodes, fill=orange] {2,3};
\draw[color=black] (v12) to node[draw=none,  font = \labelsize,fill = none, above, yshift = \yshiftLabels] {Gaussian (0.22)} (v23);
\draw[color=orange] (v23) to node[draw=none,  font = \labelsize,fill = none, above, yshift = \yshiftLabels] {Independence} (v24);
\node[TreeLabels] (T3)      [below of = T2] {}
node             (v132)         [right of = T3, xshift =2*\xshiftNodes, fill=orange] {0,1;2}
node             (v243)         [right of = v132, xshift =\xshiftNodes, fill=orange] {1,3;2};
\draw[color=orange] (v132) to node[draw=none,  font = \labelsize,fill = none, above, yshift = \yshiftLabels] {Independence} (v243);
\end{tikzpicture}
\label{fig:vineParCor-step4}
\end{figure}

\subsection{Analysis of Step 5 using different quantiles in $vineregRes$}\label{sec:quantiles}
We run $vineregRes$ using quantile levels 0.05, 0.50, and 0.95 to estimate residuals for the DGP1 setting. We call the models $vineregRes$-0.05,  $vineregRes$-0.50, and  $vineregRes$-0.95, respectively.  We gave the results in Table \ref{tab:alpha}. We observed that using the level 0.05 gave better TPR than the level 0.95. On the other hand, the level 0.95 has a lower FDR than the level 0.05. However, as implemented in the manuscript, the TPR and FDR of the level 0.50 are usually between those of the others. Thus, it averages the others' power in selecting variables. Moreover, the lowest pinball loss was achieved using the level 0.50 in all cases at all quantiles, i.e., using the model $vineregRes$-0.50.

\begin{table}[H]
\centering
\caption{Comparison of the average performance of the methods $vineregRes$ using quantile levels 0.05 ($vineregRes$-0.05), 0.50 ($vineregRes$-0.50), and 0.95 ($vineregRes$-0.95) to estimate residuals for the DGP1 setting  on the training set and on the test set for the pinball loss $(PL_\alpha)$ at different quantile levels $\alpha$ over 100 replications. The best performance for each quantile level on the test set is bold. The numbers in parentheses under a method’s name column are the corresponding empirical standard errors.}
\label{tab:alpha}
\begin{tabular}{llllll}
\hline
DGP & Measure & Case & $vineregRes$-0.05 & $vineregRes$-0.50 & $vineregRes$-0.95 \\\hline
\multirow{12}{*}{1} & \multirow{3}{*}{TPR} & 1 & 0.81 (0.02) & 0.80 (0.02) & 0.71 (0.02) \\
 &  & 2 & 0.81 (0.02) & 0.79 (0.03) & 0.76 (0.06) \\
 &  & 3 & 0.80 (0.02) & 0.78 (0.02) & 0.74 (0.03) \\  \hline
 & \multirow{3}{*}{FDR} & 1 & 0.09 (0.02) & 0.08 (0.01) & 0.06 (0.01) \\
 &  & 2 & 0.14 (0.02) & 0.13 (0.02) & 0.05 (0.01) \\
 &  & 3 & 0.15 (0.02) & 0.15 (0.02) & 0.09 (0.01) \\\hline
 & \multirow{3}{*}{Total Vars} & 1 & 4.63 (0.18) & 4.56 (0.20) & 4.00 (0.18) \\
 &  & 2 & 5.13 (0.25) & 5.04 (0.26) & 4.20 (0.17) \\
 &  & 3 & 4.99 (0.25) & 5.29 (0.33) & 4.40 (0.20) \\ \hline
 & \multirow{3}{*}{Time} & 1 & 0.13 (0.01) & 0.17 (0.01) & 0.11 (0.01) \\
 &  & 2 & 0.22 (0.01) & 0.30 (0.02) & 0.20 (0.01) \\
 &  & 3 & 0.54 (0.03) & 0.77 (0.05) & 0.44 (0.03)\\ \hline
\end{tabular}
\begin{tabular}{llllll}
\hline
DGP & Measure & Case & $vineregRes$-0.05 & $vineregRe$s-0.50 & $vineregRes$-0.95 \\
\hline
\multirow{9}{*}{1} & \multirow{3}{*}{$PL_{0.05}$} & 1 & \textbf{0.21 (0.01)} & \textbf{0.21 (0.01)} & \textbf{0.21 (0.01)} \\
 &  & 2 &\textbf{0.22 (0.01)}  & \textbf{0.22 (0.01)} & \textbf{0.22 (0.01)} \\
 &  & 3 & \textbf{0.21 (0.01)}  & \textbf{0.21 (0.01)} & \textbf{0.21 (0.01)} \\ \hline
 & \multirow{3}{*}{$PL_{0.50}$} & 1 & \textbf{0.79 (0.03)}& \textbf{0.79 (0.03)} & \textbf{0.79 (0.04)}  \\
 &  & 2 & 0.80 (0.03)  & \textbf{0.79 (0.03)} & 0.82 (0.04) \\
 &  & 3 & \textbf{0.76 (0.02)} & \textbf{0.76 (0.02)} & \textbf{0.76 (0.02)} \\\hline
 & \multirow{3}{*}{$PL_{0.95}$} & 1 &\textbf{0.43 (0.07)} &\textbf{0.43 (0.07)} & \textbf{0.43 (0.07)} \\
 &  & 2 & 0.41 (0.05) & \textbf{0.39 (0.04)} & 0.40 (0.05) \\
 &  & 3 &  \textbf{0.38 (0.03)}&  \textbf{0.38 (0.03)} & \textbf{0.38 (0.03)} \\ \hline
\end{tabular}
\end{table}
     
     \section{Pair copula families \& stopping \& complexity}\label{app4}
   \subsection{Candidate bivariate copula families}
   We work with following parametric bivariate copula families and their 90, 180, 270 degree rotations:	BB6, BB7, BB8, Clayton, Frank, Gaussian, Gumbel, Joe, t, and independence copula.
   \subsection{A stopping criterion}
   In the $s$th iteration of $vineregRes$ and $vineregParCor$, the conditional AIC of the model based on the D-vine copula $\hat{C}^{(s)}$ and D-vine copula density $\hat{c}^{(s)}$ with the model's effective degrees of freedom $|\bm{\hat{\Theta}}^{(s)}|$ is 
\begin{equation*}
CAIC(\bm{\hat{\Theta}}^{(s)}) = -2 \cdot \sum_{i=1}^{n} \left(\textrm{ln } \hat{c}^{(s)}_{0 | \bm{I}^{(s)}_{var}} (\hat{F}_Y(y_i) | \hat{F}_{p_1}(x_{i,p_1}), \ldots, \hat{F}_{p_{d_{(s)}}}(x_{i, p_{d_{(s)}}})) +  \textrm{ln } \hat{f}_Y(y_i) \right)+2 \cdot |\bm{\hat{\Theta}}^{(s)}|,
\end{equation*}
where it holds $\{p_1, \ldots, p_{d_{(s)}}\} \subseteq \bm{I}^{(s)}_{var}$.

Since we estimate the marginal distributions of the given data to have its pseudo-copula data non-parametrically, we consider the effective degrees of freedom as a penalization term, for which more details are provided in Section 7.6 of \textcite{hastie2009elements}. 
   \subsection{Complexity}
   In $vineregRes$, the selection of bivariate copulas exists at Steps 2 and 3. Step 2 fits a bivariate copula to the data of the iteration's pseudo-response and a candidate explanatory variable. Therefore, the number of bivariate copulas to be chosen at Step 2 is the total number of candidate explanatory variables at the given iteration. At the $s$th iteration, there are $p-(s-1)$ candidate explanatory variables. Thus, the number of bivariate copulas to be selected at this step is given by 
\begin{equation}
\sum_{s=1}^{p} (p-(s-1))= \frac{p \cdot (p+1)}{2}.
\label{eq:comp-step2}
\end{equation}
In $vineregParCor$, the variable selection is based on partial correlations; thus, its Step 2 does not perform any selection of bivariate copulas. 

Step 3 of $vineregRes$ and $vineregParCor$ extends the D-vine structure, adding the selected explanatory variable at Step 2. Thus, at the $s$th iteration, it results in the selection of $s$ pair copulas. Hence, Step 3 selects the following number of bivariate copulas:
\begin{equation}
\sum_{s=1}^{p} s= \frac{p \cdot (p+1)}{2}.
\label{eq:comp-step5}
\end{equation}
 Considering Equations \eqref{eq:comp-step2} and \eqref{eq:comp-step5}, we calculate the total complexity of $vineregRes$:
 \begin{equation}
\sum_{s=1}^{p} s + \sum_{s=1}^{p} (p-(s-1)) = p \cdot (p+1).
\label{eq:comp}
\end{equation}
Considering Equation \eqref{eq:comp-step5}, we calculate the total complexity of $vineregParCor$:
 \begin{equation}
\sum_{s=1}^{p} s= \frac{p \cdot (p+1)}{2}.
\label{eq:comp-parcor}
\end{equation}

Equations \eqref{eq:comp} and \eqref{eq:comp-parcor} show that the complexity of $vineregParCor$ and $vineregRes$ in terms of the total number of selected bivariate copulas is $\mathcal{O}(p^{2})$.

   \section{Irrelevant and redundant variables}\label{app5}
   $\bm{X}_\mathcal{I}$ is irrelevant for $Y$ given $\bm{X}_\mathcal{M}$ if $\bm{X}_\mathcal{I} \perp\!\!\!\perp(Y, \bm{X}_\mathcal{M})$. Since it holds that $\bm{X}_\mathcal{I} \perp\!\!\!\perp(Y, \bm{X}_\mathcal{M})$ if and only if $Y \perp\!\!\!\perp \bm{X}_\mathcal{I} | \bm{X}_\mathcal{M}$ and $\bm{X}_\mathcal{I} \perp\!\!\!\perp \bm{X}_\mathcal{M}$, it marginally implies $\bm{X}_\mathcal{I} \perp\!\!\!\perp Y$. Hence, $\bm{X}_\mathcal{I}$ do not have any predictability for $Y$ unconditionally or conditionally.
   
    $\bm{X}_\mathcal{R}$ is redundant for $Y$ given $\bm{X}_\mathcal{M}$ if $Y \perp\!\!\!\perp  \bm{X}_\mathcal{R} | \bm{X}_\mathcal{M}$ and $\bm{X}_\mathcal{M} \not\!\perp\!\!\!\perp \bm{X}_\mathcal{R}$. Conditional on $\bm{X}_\mathcal{M} = \bm{x}_\mathcal{M}$, for any $\bm{x}_\mathcal{M}$, $\bm{X}_\mathcal{R}$ does not provide any additional information for predicting $Y$, but $\bm{X}_\mathcal{R}$ without $\bm{X}_\mathcal{M}$ have some predictability for $Y$ 
   \subsection{Redundant variables in a D-vine copula}
\begin{proposition}{(Redundant variables in a D-vine copula)} If a $p$-dimensional D-vine copula is a $t$-truncated vine, where the response is a leaf node represented by the first node, $t >1$, $p \geq 3$, $t \leq p-1$, there are $p-t-1$ redundant or irrelevant variables.
\end{proposition}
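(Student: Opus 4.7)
I label the $p$-dimensional D-vine with node order $0-1-\cdots-(p-1)$, where $0$ denotes the response $Y$ and $1,\ldots,p-1$ denote the explanatory variables. The strategy splits into two steps: first, use the $t$-truncation together with the D-vine density formula to derive the conditional independence $Y\ind (X_{t+1},\ldots,X_{p-1})\mid (X_1,\ldots,X_t)$; second, classify each of the $p-1-t=p-t-1$ variables $X_{t+1},\ldots,X_{p-1}$ as either redundant or irrelevant given $\mathcal{M}=\{X_1,\ldots,X_t\}$ using Definitions 2.2 and 2.4.

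For the first step I start from Equation (A.1). By the $t$-truncation hypothesis, every pair copula density $c_{0,j;1,\ldots,j-1}$ with $j>t$ equals one, so
\begin{equation*}
f_{0|1,\ldots,p-1}(y|x_1,\ldots,x_{p-1}) = f_Y(y)\cdot c_{0,1}\bigl(F_Y(y),F_1(x_1)\bigr)\cdot \prod_{j=2}^{t} c_{0,j;1,\ldots,j-1}(\cdot).
\end{equation*}
The surviving arguments $F_{0|1,\ldots,j-1}$ and $F_{j|1,\ldots,j-1}$ for $j\leq t$ only depend on coordinates $(y,x_1,\ldots,x_{j-1})\subseteq (y,x_1,\ldots,x_t)$, so the right-hand side is exactly the D-vine density formula applied to the sub-vine on $(Y,X_1,\ldots,X_t)$, namely $f_{0|1,\ldots,t}(y|x_1,\ldots,x_t)$. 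Integrating in $y$ then promotes this density equality to $F_{0|1,\ldots,p-1}=F_{0|1,\ldots,t}$, which is the desired conditional independence statement.

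For the second step I consider, for each $j\in\{t+1,\ldots,p-1\}$, the singleton $\{X_j\}$. The identity just established gives $F_{Y|X_1,\ldots,X_t,X_j}=F_{Y|X_1,\ldots,X_t}$, so the first condition in both Definition 2.2 and Definition 2.4 holds. If $F_{X_1,\ldots,X_t,X_j}\neq F_{X_1,\ldots,X_t}\cdot F_{X_j}$, Definition 2.2 applies and $X_j$ is redundant given $\mathcal{M}$. Otherwise $X_j$ is marginally independent of $\mathcal{M}$; combined with $Y\ind X_j\mid\mathcal{M}$, a routine integration of the joint density against the density of $\mathcal{M}$ gives $F_{Y,X_j}=F_Y\cdot F_{X_j}$, so the remaining condition of Definition 2.4 is satisfied and $X_j$ is irrelevant. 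Summing over the $p-1-t$ admissible indices $j$ yields $p-t-1$ variables that are each redundant or irrelevant, as claimed.

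\textbf{Main obstacle.} The conceptually delicate step is verifying that, once the pair copulas above tree $t$ collapse to one, the residual product is \emph{exactly} the D-vine density on $(Y,X_1,\ldots,X_t)$; this rests on the structural fact that the arguments of $c_{0,j;1,\ldots,j-1}$ never reach beyond $x_{j-1}$ along the D-vine path. The remaining work—lifting the density identity to a CDF identity and resolving the redundant-versus-irrelevant dichotomy by a short integration—is routine.
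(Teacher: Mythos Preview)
Your proof is correct and follows essentially the same route as the paper: both collapse the pair-copula factors in trees above level $t$ so that $f_{0\mid 1,\ldots,p-1}$ reduces to $f_{0\mid 1,\ldots,t}$, and then classify the remaining $p-t-1$ variables as redundant or irrelevant according to whether they are dependent on $\mathcal{M}=\{X_1,\ldots,X_t\}$. Your second step is slightly more careful than the paper's---you treat each $X_j$ individually and explicitly verify the third requirement $F_{Y,X_j}=F_Y\cdot F_{X_j}$ of Definition~2.4 via the integration argument, whereas the paper handles the whole block at once and leaves that check implicit---but this is a refinement within the same strategy rather than a different approach.
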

\begin{proof}
We can write the conditional density of the response:
\begin{align*}
f_{1|2,\ldots,p}(x_1|x_2,\ldots, x_p)=  &\big[\prod_{j=3}^{p}  c_{1, j;2, \ldots j-1}\big( F_{1 | 2, \ldots j-1}(x_{1} | x_{2}, \ldots, x_{j-1}),   F_{j | 2, \ldots j-1}(x_{j} | x_2, \ldots, x_{j-1}\big)\big]\\
&\cdot c_{1,2}\big(F_1(x_1), F_2(x_2)\big) \cdot f_{1}(x_{1}).
\end{align*}
We can write the multiplication of copula density terms  as follows:
\begin{align*}
f_{1|2,\ldots,p}(x_1|x_2,\ldots, x_p)=  &\big[\prod_{j=3}^{t}  c_{1, j;2, \ldots j-1}\big( F_{1 | 2, \ldots j-1}(x_{1} | x_{2}, \ldots, x_{j-1}),  \\& F_{j | 2, \ldots j-1}(x_{j} | x_2, \ldots, x_{j-1}\big)\big]\\ 
&\cdot \big[\prod_{k=t+1}^{p}  c_{1, k;2, \ldots k-1}\big( F_{1 | 2, \ldots k-1}(x_{1} | x_{2}, \ldots, x_{k-1}), \\& F_{k | 2, \ldots k-1}(x_{k} | x_2, \ldots, x_{k-1}\big)\big]\\ 
&\cdot c_{1,2}\big(F_1(x_1), F_2(x_2)\big) \cdot f_{1}(x_{1}).
\end{align*}
A $t$-truncated vine has independence pair copulas in the tree levels  higher than $t$. Thus, the pair copula densities  in the tree levels  higher than $t$ are 1 everywhere in the domain. Accordingly, the multiplication of such pair copula densities are 1. Thus, we obtain
\begin{align*}
f_{1|2,\ldots,p}(x_1|x_2,\ldots, x_p)=  \big[&\prod_{j=3}^{t}  c_{1, j;2, \ldots j-1}\big( F_{1 | 2 \ldots j-1}(x_{1} | x_{2}, \ldots, x_{j-1}),  F_{j | 2, \ldots j-1}(x_{j} | x_2, \ldots, x_{j-1}\big)\big]\\ 
&\cdot c_{1,2}\big(F_1(x_1), F_2(x_2)\big) \cdot f_{1}(x_{1}).
\end{align*}
Without loss of generality, assume that it holds $p=k$. Then the conditional density of the response is written as
\begin{align*}
f_{1|2,\ldots,k}(x_1|x_2,\ldots, x_k)=  &\big[\prod_{j=3}^{k}  c_{1, j;2, \ldots j-1}\big( F_{1 | 2, \ldots j-1}(x_{1} | x_{2}, \ldots, x_{j-1}),   F_{j | 2, \ldots j-1}(x_{j} | x_2, \ldots, x_{j-1}\big)\big]\\
&\cdot c_{1,2}\big(F_1(x_1), F_2(x_2)\big) \cdot f_{1}(x_{1}).
\end{align*}
Take $k=t$, we have $f_{1|2,\ldots,p}(x_1|x_2,\ldots, x_p)= f_{1|2,\ldots,t}(x_1|x_2,\ldots, x_{t})$. If the variables $X_2, \ldots, X_{t}$ are dependent (independent) on the variables $X_{t+1}, \ldots, X_p$, there are $p-t-1$ redundant (irrelevant) variables for the response.
\end{proof}

\begin{example} Suppose that the first node and others represent the response and three explanatory variables in the D-vine in Figure 1 of the manuscript, respectively ($p=4$). Assume that the D-vine is a 2-truncated vine ($t=2$) and the copula for the pair (3, 4) is not independence. Then, there is one redundant variable for the response, which is the variable represented by the fourth node, given the two relevant variables represented by the second and third nodes.
\end{example}


\section{Simulation}\label{app6}
\subsection{Performance measures} 
\begin{equation*}
TPR(M) = \frac{\sum\limits_{\substack{X_j \in \mathcal{X}_{rel} \\ X_j \in \mathcal{X}^{M}_{chosen} \\ j=1, \ldots, |\mathcal{X}^{M}_{chosen} |}} 1 }{|\mathcal{X}_{rel} |}
\quad \textrm{and} \quad
FDR(M) = \frac{\sum\limits_{\substack{X_j \in \mathcal{X}_{irrel} \\ X_j \in \mathcal{X}^{M}_{chosen} \\ j=1, \ldots, |\mathcal{X}^{M}_{chosen} |}} 1 }{|\mathcal{X}^{M}_{chosen} |},
\end{equation*}
 where $\mathcal{X}_{rel}$ is the set of relevant variables,  $\mathcal{X}_{irrel}$ is the set of irrelevant variables, and $\mathcal{X}^{M}_{chosen}$ denotes the set of chosen variables by a method $M$. 
 
The accuracy of the quantile predictions $\hat{y}^{\alpha, M}_i$ at the level $\alpha$ by a method $M$ compared to the given response $y_i$,  $i=1,\ldots, n$ is given by the pinball loss as follows:
 \begin{equation*}
PL_{\alpha}(M)=\frac{\sum_{i=1}^{n} (\hat{y}^{\alpha, M}_i-y_i)(\mathbb{1}(y_i \leq \hat{y}_i^{\alpha, M})-\alpha) }{n}.
\end{equation*}

\subsection{Alternative methods}
 \textit{Penalized quantile regression with LASSO function (LQRLasso)}: it extends the linear quantile regression method, adding a LASSO penalty function in the quantile regression objective function to perform the variable selection. Thus, it solves the following optimization problem to estimate the coefficients at the quantile level $\alpha$:
\begin{equation}
\argmin_{\bm{\beta} \in \mathbb{R}^{p+1}} \bigg[\sum_{i=1}^{n} \rho_{\alpha}\left(y_{i}-\beta_0- \sum_{j=1}^{p} x_{i,j}\cdot \beta_j\right)+\sum_{j=0}^{p} \lambda\cdot|\beta_{j}|\bigg],
\label{eq:lasso}
\end{equation}
where $\rho_{\alpha}(r)=r(\alpha- \bm{1}(r<0))$ denotes the check function, $ \lambda \in \mathbb{R}$ corresponds to a tuning parameter for the penalization of the coefficients, $(y_i, x_{i,1}, \ldots, x_{i,p})$, $i=1, \ldots, n$, are realizations of the random vector $(Y, X_1, \ldots, X_p)$, and $Y$ denotes a response variable, and the others correspond to explanatory variables.

We perform a five-fold cross-validation using the training set to optimize $\lambda$ in each replication and set the associated coefficients below $10^{-8}$ to zero. The variables with zero coefficients are not relevant for predicting the response's quantiles. Such procedures are implemented in the  {\fontfamily{pcr}\selectfont R} package {\fontfamily{pcr}\selectfont rqPen} \parencite{sherwood2017rqpen}.  Since it is not straightforward to decide which transformation of variables or some interaction effects should be applied in a linear model in the high-dimensional data, we analyze the variables without any transformations. Since a LASSO penalty function is sensitive to the feature scales, we apply a standardization with zero mean and unit variance for each feature before its fit.

 \textit{Quantile regression forest (QRF)}: it is a nonlinear quantile regression method built upon random forest ideas \parencite{meinshausen2006quantile}. The core idea is to approximate the conditional distribution function of the response given explanatory variables.  It starts with fitting a random forest regression model to data for prediction. Then for each response observation to be predicted, it finds the terminal node, the node without any splits, in each tree. Next, each observation in the terminal node gets a weight inversely proportional to the total number of observations in the terminal node. Such weights are calculated for each tree for the associated observations and then normalized to one. Accordingly, each observation used for training the model gets a weight between zero and one, which sums up to one for all observations.  Finally, the weights are used to obtain the empirical conditional distribution function of the response given explanatory variables, where quantile regression estimates are calculated for the given observation to be predicted. The procedure is repeated for each new sample. More details are given by \textcite{meinshausen2006quantile}, and the ideas are implemented in the package {\fontfamily{pcr}\selectfont quantregForest} \parencite{Meinshausen2022}. We work with the package's default specifications; however, we set the minimum number of observations in terminal nodes to one-twentieth of the number of observations. We have not performed any cross-validation. 
 
$QRF$ assesses the feature importance implemented in the package {\fontfamily{pcr}\selectfont randomForest} \parencite{randomForest}. The idea is to calculate the decrease in the residual sum of squares (RSS) for each feature, which is the sum of the squares of the difference between the response values and the predicted response values. Specifically, the predicted values, thanks to the split through the feature, are considered in the RSS in a given tree and for a given feature. For instance, if the feature is split two times in a given tree, the decrease in the RSS is evaluated two times for that feature. The process is applied to all features and trees, and then the decrease in the RSS for each feature is divided by the number of trees grown to calculate the features' importance.

\begin{table}[H]
\caption{Estimated the LASSO penalty parameters $\lambda$ used in Section 4.3.}
\centering
\begin{tabular}{|l|l|l|l|}
\hline
Trait & Quantile & $G=100$ & $G=200$ \\ \hline
\multirow{3}{*}{FF} & 0.05 & 0.017 &  0.007\\ \cline{2-4} 
 & 0.50 & 0.014 & 0.010 \\ \cline{2-4} 
 & 0.95 & 0.024 &  0.009\\ \hline
\multirow{3}{*}{MF} & 0.05 & 0.029 & 0.007 \\ \cline{2-4} 
 & 0.50 & 0.009 & 0.011 \\ \cline{2-4} 
 & 0.95 & 0.032 & 0.029 \\ \hline
\multirow{3}{*}{PH\_V4} & 0.05 & 0.020 &0.018  \\ \cline{2-4} 
 & 0.50 & 0.027 & 0.014 \\ \cline{2-4} 
 & 0.95 &  0.013& 0.009 \\ \hline
\multirow{3}{*}{PH\_V6} & 0.05 & 0.015 & 0.010 \\ \cline{2-4} 
 & 0.50 & 0.011 & 0.009 \\ \cline{2-4} 
 & 0.95 & 0.013 & 0.013 \\ \hline
\end{tabular}
\end{table}

   \subsection{Simulated data sets}
   \begin{figure}[H]
     \caption{Pairs plots of a simulated data set of the response and relevant explanatory variables with 450 observations under the DGP1 (left) and DGP2 (right) settings, where diagonal: variable’s density estimates, lower: pairwise scatter plots with red curve showing a linear model fit and orange curve demonstrating a local polynomial regression fit.}
        \centering
        \begin{subfigure}[b]{0.45\textwidth}
          \centering
           \includegraphics[width=\textwidth]{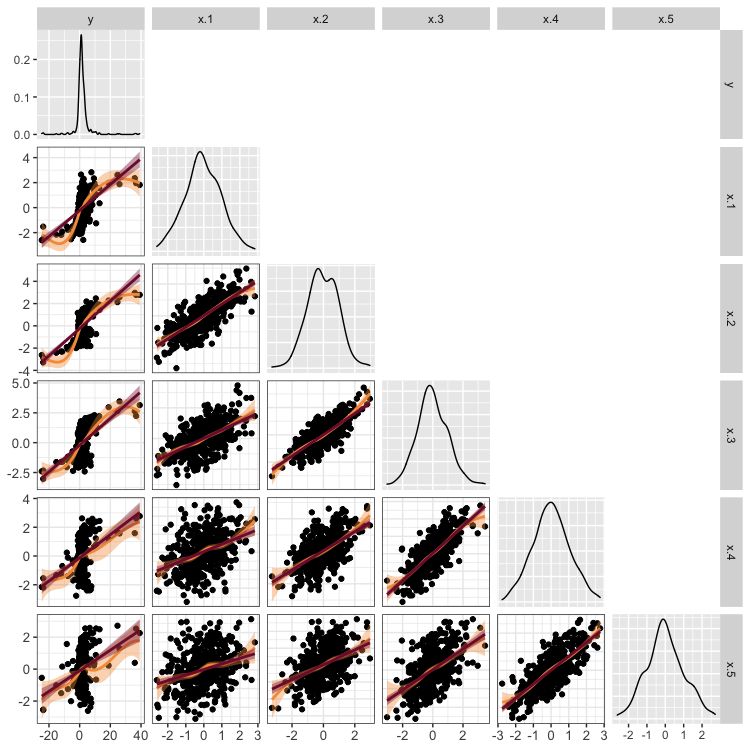}   
               \end{subfigure}
       \hspace{.5cm}
        \begin{subfigure}[b]{0.45\textwidth}  
          \centering 
           \includegraphics[width=\textwidth]{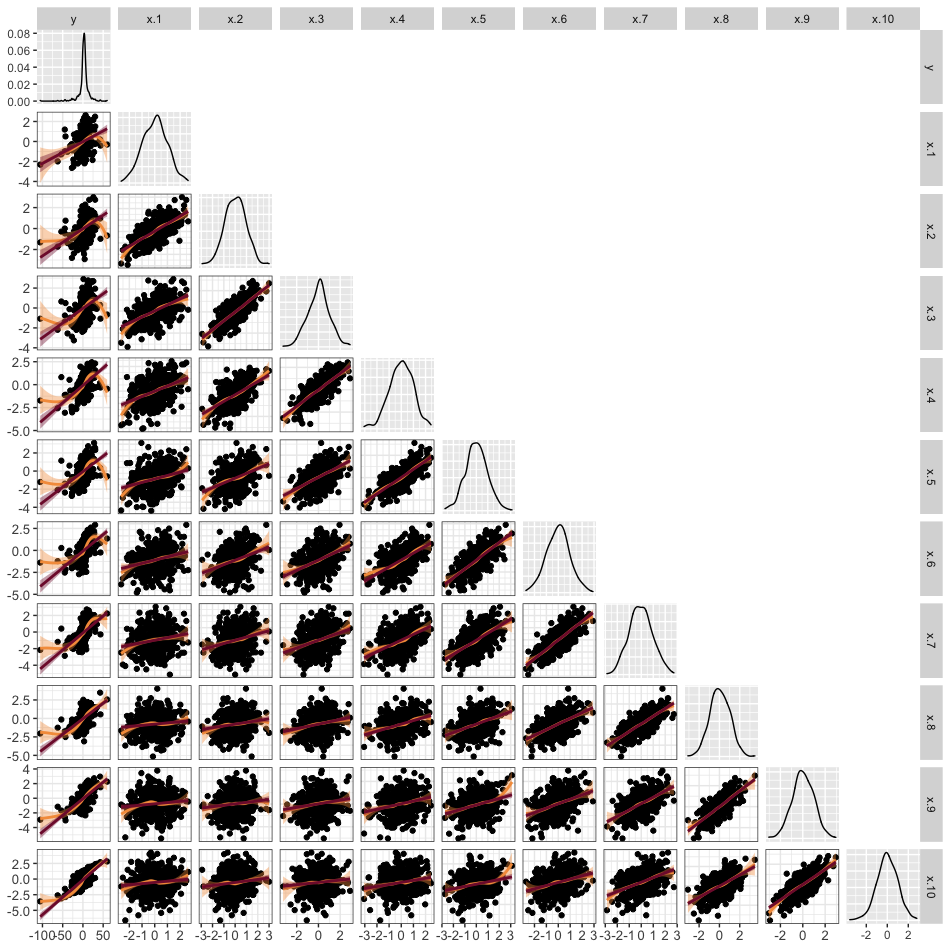}
        \end{subfigure}

  \label{fig:sim-x}
\end{figure}

\section{Simulation results for $QRF$}\label{app7}
We rank the variables based on their variable importance given by the model from highest to lowest for each case of both DGPs per replication. Then when we analyze the median rank of the selected variables out of 100 replications, we observe that $QRF$'s variable importance identifies the most important variables correct in both DGPs (available from the authors upon request). Also, \textcite{speiser2019comparison} study different variable selection methods for random forests and conclude that their performance varies regarding computational complexity and accuracy by analyzing different data sets. Thus, comparing selection methods for random forests, such as proposed by \textcite{genuer2010variable}, will be future work.

\section{Fitted D-vines' first tree level by $vineregRes$}\label{app8}
\begin{table}[H]
      \caption{Pair-copulas with the estimated Kendall's $\tau$ in the fitted D-vine's first tree for MF (left) and FF (right). The numbers under the edge denote the selected feature indices.}
      \begin{tabular}{llll}
\hline
$G$ & Edge & Family & $\tau$ \\ \hline
\multirow{12}{*}{100} &(MF, 7) &Gaussian	 &0.41\\
 & (7, 80) & Gaussian & 0.45 \\
 & (80, 82) & Gaussian & 0.45 \\
 & (82, 2) & BB7 & 0.33 \\
 & (2, 77) & BB7 & 0.30 \\
 & (77, 71) & Gumbel & 0.48 \\
 & (71, 66) & BB1 & 0.46 \\
 & (66, 70) & Gaussian & 0.46 \\
 & (70, 63) & BB1 & 0.46 \\
 & (63, 33) & BB8 & 0.42 \\
 & (33, 69) & Gaussian & 0.51 \\
 & (69, 57) & Gaussian & 0.51 \\ \hline
\multirow{8}{*}{200} & (MF, 4) & Gaussian & 0.41 \\
 & (4, 35) & Gaussian & 0.53 \\
 & (35, 1) & BB7 & 0.30 \\
 & (1, 29) & BB7 & 0.31 \\
 & (29, 46) & Gaussian & 0.59 \\
 & (46, 32) & BB7 & 0.49 \\
 & (32, 9) & Gumbel & 0.51 \\
 & (9, 40) & Gaussian & 0.54 \\ \hline
\end{tabular}
\hfill
\begin{tabular}{llll}
\hline
$G$ & Edge & Family & $\tau$ \\ \hline
\multirow{12}{*}{100} & (FF, 34) & Gaussian & 0.38 \\
 & (34, 164) & Frank & 0.51 \\
 & (164, 158) & Frank & 0.46 \\
 & (158, 137) & BB8 & 0.46 \\
 & (137, 18) & BB8 & 0.58 \\
 & (18, 20) & Gaussian & 0.73 \\
 & (20, 144) & Frank & 0.49 \\
 & (144, 23) & Frank & 0.50 \\
 & (23, 155) & BB8 & 0.55 \\
 & (155, 168) & Frank & 0.54 \\
 & (168, 51) & Frank & 0.48 \\ 
 & &&\\\hline
\multirow{8}{*}{200} & (FF, 19) & Gaussian & 0.38 \\
 & (19, 68) & BB8 & 0.58 \\
 & (68, 69) & BB8 & 0.60 \\
 & (69, 9) & Gaussian & 0.59 \\
 & &&\\
& &&\\
& &&\\
& &&\\ \hline
\end{tabular}
\end{table}

\begin{table}[H]
      \caption{Pair-copulas with the estimated Kendall's $\tau$ in the fitted D-vine's first tree for PH\_V4 (left) and PH\_V6 (right). The numbers under the edge denote the selected feature indices.}
\begin{tabular}{llll}
\hline
$G$ & Edge & Family & $\tau$ \\ \hline
\multirow{6}{*}{100} & (PH\_V4, 44) & Gaussian & 0.39 \\
 & (44, 150) & Frank & 0.57 \\
 & (150, 162) & BB8 & 0.47 \\
 & (162, 160) & BB8 & 0.45 \\
 & (160, 3) & BB8 & 0.57 \\
 & (3, 95) & Frank & 0.61 \\ \hline
\multirow{3}{*}{200} & (PH\_V4, 2) & Gaussian & 0.38 \\
 & (2, 69) & Frank & 0.66 \\
 & (69, 49) & Frank & 0.71 \\ \hline
\end{tabular}
\hfill
\begin{tabular}{llll}
\hline
$G$ & Edge & Family & $\tau$ \\ \hline
\multirow{6}{*}{100} & (PH\_V6, 1) & Gumbel & 0.34 \\
 & (1, 175) & BB8 & 0.40 \\
 & (175, 110) & BB8 & 0.53 \\
 & (110, 169) & BB8 & 0.58 \\ 
  & &&\\
& &&\\ \hline
\multirow{3}{*}{200} & (PH\_V6, 1) & BB7 & 0.34 \\
 & (1, 88) & Frank & 0.53 \\ 
& &&\\ \hline
\end{tabular}
\end{table}

\section{Quantile crossing of $LQRLasso$}\label{app9}
$LQRLasso$ performs variable selection and linear quantile regression at a specified quantile level at the same time. However, its final model fit might result in the crossing of quantile curves. For an illustration, we choose the first and fourth features obtained using $G=100$ as explanatory variables for predicting the trait MF at the levels 90\% and 95\% and let $LQRLasso$ run for both levels. Then only the fourth feature is identified in both, thereby estimating its coefficient. However, as seen in the following plot, the coefficient estimated by $LQRLasso$ leads to the quantile crossing.
   \begin{figure}[H]
     \caption{Fitted linear quantile regression curves at 90\% and 95\% levels for MF vs. a selected feature by $LQRLasso$, where candidate features to select are the first and fourth features using $G=100$.}
       \centering
                       \includegraphics[width=.75\textwidth]{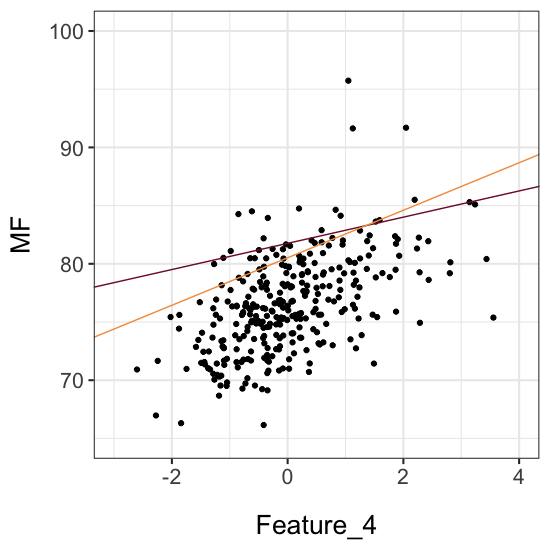}
   \label{fig:qcross-lqr}
\end{figure}

\section{Feature and method analyses of real data}\label{app10}
The minimum correlation between features is 0.58 (0.67) for FF, 0.11 (0.30) for MF, 0.45 (0.77) for  PH\_V4, and 0.55 (0.69) for PH\_V6, whereas its maximum is 0.98 (0.97) for FF, 0.91 (0.93) for MF, 0.97 (0.99) for PH\_V4, and 0.98 (0.99) for PH\_V6.

In the following, we provide selected feature indices by  $vineregRes$ and $vineregParCor$ as a result of our real data analysis in Section 4 of the manuscript. We remark that the features are based on the grouping $G$. For instance, the first feature using  a grouping size of $G=100\ (200)$ includes 100 (200) SNPs whose p-value in a simple linear regression for a trait is smaller than others.  Thus, selected features using  a grouping size of $G=100$ and $G=200$ cannot be compared. Moreover, $QRF$ uses all features to make predictions.

\begin{table}[ht]
\centering
\caption{The relevant feature indices by  $vineregRes$ and $vineregParCor$ for each grouping $G$ and trait. The same feature indices selected by  $vineregRes$ and $vineregParCor$ for each grouping $G$ and trait are highlighted.}
  \scalebox{0.75}{
\begin{tabular}{l|ll | ll}
\hline
 & $vineregRes$ & $vineregParCor$ & $vineregRes$ & $vineregParCor$ \\ \hline
Trait & \multicolumn{2}{c}{$G=100$} & \multicolumn{2}{c}{$G=200$} \\ \hline
\multirow{4}{*}{FF} & \multirow{4}{*}{\begin{tabular}[c]{@{}l@{}}34, 164, \textbf{158},\textbf{137}, 18, 20\\ 144, \textbf{23}, \textbf{155}, 168, \textbf{51}\end{tabular}} & \multirow{4}{*}{\begin{tabular}[c]{@{}l@{}}37, \textbf{158}, \textbf{137}, 68, 166, 163, \\ 140, \textbf{155}, 101, 104, 126, \textbf{23}, \\ 47, 80, 41, 42, 52, 55,\\  49, \textbf{51}, 136, 36\end{tabular}} & \multirow{4}{*}{\textbf{19}, 68, 69, \textbf{9}} & \multirow{4}{*}{\begin{tabular}[c]{@{}l@{}}\textbf{19}, 79, 24, \textbf{9}, 48, 84,\\ 70, 83, 82, 66, 63, 50, \\ 67, 45\end{tabular}} \\
 &  &  &  &  \\
  &  &  &  &  \\
 &  &  &  &  \\ \hline
\multirow{3}{*}{MF} & \multirow{3}{*}{\begin{tabular}[c]{@{}l@{}}\textbf{7, 80, 82, 2}, 77, 71, \\ 66, 70, \textbf{63}, 33, 69, \textbf{57}\end{tabular}} & \multirow{3}{*}{\begin{tabular}[c]{@{}l@{}}\textbf{7, 80, 2}, 45, 27, 24, \\ \textbf{57, 82}, 16, 52, 15, 76, \\ 62, \textbf{63}, 19, 81\end{tabular}} & \multirow{3}{*}{\begin{tabular}[c]{@{}l@{}}\textbf{4, 35, 1, 29}, 46, \\ 32, 9, \textbf{40}\end{tabular}} & \multirow{3}{*}{\begin{tabular}[c]{@{}l@{}}\textbf{4, 35, 1, 29, 40}, 3, \\ 38, 8, 14, 23, 20, 15, \\ 21\end{tabular}} \\
 &  &  &  &  \\
 &  &  &  &  \\ \hline
\multirow{3}{*}{PH\_V4} & \multirow{3}{*}{\textbf{44}, 150, 162, 160, 3, \textbf{95}} & \multirow{3}{*}{\begin{tabular}[c]{@{}l@{}}\textbf{44}, 5, 179, 153, 93, \textbf{95}\\ 92, 182, 51, 86, 88\end{tabular}} & \multirow{3}{*}{\textbf{2}, 69, 49} & \multirow{3}{*}{\begin{tabular}[c]{@{}l@{}}\textbf{2}, 77, 99, 26, 59, 73, \\ 4, 72, 20, 29, 83\end{tabular}} \\
 &  &  &  &  \\
 &  &  &  &  \\ \hline
\multirow{3}{*}{PH\_V6} & \multirow{3}{*}{\textbf{1}, 175, 110, 169} & \multirow{3}{*}{\begin{tabular}[c]{@{}l@{}}\textbf{1}, 26, 2, 119, 3, 116, \\ 86, 163, 72, 101, 130, 98\end{tabular}} & \multirow{3}{*}{\textbf{1, 88}} & \multirow{3}{*}{\begin{tabular}[c]{@{}l@{}}13, \textbf{1}, 82, 49, 66, 56, \\ 53, 30, 46, 9, 68, \textbf{88}\end{tabular}} \\
 &  &  &  &  \\
 &  &  &  &  \\ \hline
\end{tabular}}
\label{tab:genom-selvar}
\end{table}
Next, we provide selected feature indices by  $LQRLasso$ as a result of our real data analysis in Section 4 of the manuscript for each $G$, trait, and quantile level $\alpha$.  The intercept is included in all analyses by  $LQRLasso$. However, we observe that either there are one feature chosen by all methods or there are not any of them. In the analysis of the trait PH\_V6, the first feature is selected by all methods considered.
\begin{table}[H]
\centering
\caption{The selected feature indices by  $LQRLasso$ for each $G$, trait, and quantile level $\alpha$. The intercept is included in all.}
\begin{tabular}{l|l|l|l}
\hline
Trait & $\alpha$ & $G=100$ & $G=200$ \\
 \hline
\multirow{6}{*}{FF} & 0.05 & 76, 81, 83, 140, 157, 18, 163 & 9,28,40,41,42,69,79 \\  \cline{2-4}
 & 0.50 & \begin{tabular}[c]{@{}l@{}}3,4,23,37,46,49,52,54,61,63,68,\\ 71,78,80,85,88,99,100,101,104,105,107,108,\\ 126,130,132,134,138,142,144,147,151,155,\\ 158,160,163,164,166,168,171\end{tabular} & \begin{tabular}[c]{@{}l@{}}2,8,9,19,23,24,36,39,40,\\ 44,47,48,50,51,60,61,63,70,\\ 72,74,76,78,79,80,82,83,84,86\end{tabular} \\ \cline{2-4}
 & 0.95 & 17, 37, 88 & 9,19,49,61 \\
 \hline
\multirow{6}{*}{MF} & 0.05 & 2, 14, 24, 61, 63, 70 & 1,3,10,16,19,31,32,35,36,38,39 \\ \cline{2-4}
 & 0.50 & \begin{tabular}[c]{@{}l@{}}1,4,6,13,15,18,19,20,23,24,27,28,\\ 30,33,36,37,38,39,42,45,46,47,49,52,\\ 53,54,56,57,58,59,60,62,63,64,\\ 69,71,75,77,78,80,81,84,85,89\end{tabular} & \begin{tabular}[c]{@{}l@{}}1,2,3,4,7,15,23,25,29,30,\\ 35,38,40,41\end{tabular} \\ \cline{2-4}
 & 0.95 & 7, 8, 9, 63, 69, 70, 71 & 4,5,35,36 \\
 \hline
\multirow{4}{*}{PH\_V4} & 0.05 & 19, 95, 109, 125, 139, 171, 174, 189 & 35,44,48,79,80,94 \\ \cline{2-4}
 & 0.50 & \begin{tabular}[c]{@{}l@{}}2,5,44,90,130,139,141,146,\\ 153,158,179,181,196,197\end{tabular} & \begin{tabular}[c]{@{}l@{}}1,2,22,30,44,59,60,65,68,73,\\ 74,77,83,86,90,98\end{tabular} \\ \cline{2-4}
 & 0.95 & 51, 90, 95, 130, 139, 185, 191 & 48,65,86 \\
 \hline
\multirow{7}{*}{PH\_V6} & 0.05 & 1, 76, 93, 106, 127, 130, 175 & 1,38,43,53,55 \\ \cline{2-4}
 & 0.50 & \begin{tabular}[c]{@{}l@{}}1,2,10,15,22,25,29,32,39,43,45,53,\\ 59,65,70,72,73,77,86,88,91,95,96,\\ 97,98,99,101,106,110,116,123,125,\\ 128,130,131,132,138,142,156,159,\\ 160,163,165,169,177,178,179,180\end{tabular} & \begin{tabular}[c]{@{}l@{}}1,3,6,13,15,25,30,33,35,38,\\ 48,49,53,54,55,58,59,60,62,\\ 66,69,74,80,83,86,88,89,90\end{tabular} \\ \cline{2-4}
 & 0.95 & 1, 86, 135, 138 & 1,58,63,84,89 \\ \hline
\end{tabular}  
\label{tab:genom-selvar-LQR} 
\end{table}

\begin{table}[H]
\centering
\caption{The same selected feature indices by  $vineregRes$, $vineregParCor$, and $LQRLasso$ for each $G$ and trait and quantile level $\alpha$. (-) denotes the empty set.}
\begin{tabular}{l|l|l}
\hline
Trait & $G=100$ & $G=200$ \\ \hline
FF & - & 9 \\ \hline
MF & 63 & 35 \\ \hline
PH\_V4 & - & - \\ \hline
PH\_V6 & 1 & 1 \\ \hline
\end{tabular}
\end{table}

In the following table, $bicopreg$ denotes the vine copula regression on the first feature for PH\_V4 and PH\_V6. Thus, the D-vines contain two nodes: response and the first feature. Accordingly, we perform a bivariate a copula regression. Therefore, any variable selection by $vinereg$, $vineregRes$, and  $vineregParCor$ is not needed. The bivariate copula family selection between the response and the first feature is conducted as explained in Section 2.3 of the manuscript.

$bicopreg$ using  a grouping size of $G=200$ for PH\_V6 has a pinball loss of 0.97, 3.18, and 0.94 at $0.05, 0.50$, and $0.95$, respectively. Thus, $vineregRes$, which selects the first and 88th features (Table S4) as relevant for the prediction of PH\_V6, performs better than $bicopreg$. However, having more features in $vineregParCor$, including the first one, worsens the performance compared to $bicopreg$ using  a grouping size of $G=200$. On the other hand, $vineregParCor$ has better accuracy than $bicopreg$ using  a grouping size of $G=100$ for PH\_V6. Likewise, the mean of the selected feature indices using  a grouping size of $G=100$ for PH\_V4 is 102.33 for $vineregRes$ and 106.18 for $vineregParCor$. Further, both methods do not select the first feature as relevant but redundant for PH\_V4. Nevertheless, their prediction power is stronger than $bicopreg$, except at the level $0.50$ using $G=200$
\begin{table}[H]
\centering
  \caption{Comparison of the vine copula based regression methods' performance on the test set of the real data in Section 4 of the manuscript. The best performance on the test set for each quantile level, trait, and the grouping $G$ is highlighted.}
  \scalebox{0.75}{
\begin{tabular}{llrrr  rrr}
\hline
 &  & $vineregRes$ & $vineregParCor$ & $bicopreg$ & $vineregRes$ & $vineregParCor$ & $bicopreg$ \\ \hline
Trait & Measure & \multicolumn{3}{c}{$G=100$} & \multicolumn{3}{c}{$G=200$} \\ \hline
\multirow{4}{*}{PH\_V4} &  $PL_{0.05}$ & \hl{0.51} & \hl{0.51} & 0.53 & \hl{0.51} & \hl{0.51} & 0.52 \\
 & $PL_{0.50}$ &  1.93 & \hl{1.87} & 1.92 & 1.96 & 1.99 & \hl{1.93} \\
 &  $PL_{0.95}$&  \hl{0.56} & 0.58 & 0.58 & \hl{0.57} & \hl{0.57} & 0.58 \\\hline
\multirow{4}{*}{PH\_V6} & $PL_{0.05}$&\hl{1.01} & \hl{1.01} & \hl{1.01} & \hl{0.96} & 0.98 & 0.97  \\
 & $PL_{0.50}$ &\hl{3.09} & 3.10  & 3.18 & \hl{3.06} & 3.47 & 3.18\\
 & $PL_{0.95}$& 0.91 & \hl{0.89} & 0.93 & \hl{0.90} & 1.05 & 0.94\\ \hline
\end{tabular}}
\label{tab:f-index}
\end{table}

\raggedright

\printbibliography
\end{document}